\newcommand{\SW}{\mathrm{SW}}
\newcommand{\opt}{\mathrm{OPT}}
\newcommand{\ia}{A}
\newcommand{\ib}{B}
\newcommand{\ja}{\alpha}
\newcommand{\jb}{\beta}
\newcommand{\probmu}{\hat{\mu}}
\newcommand{\esp}[1]{\mathbb E\left[#1\right]}
\newcommand{\SI}{\mathcal{I}} % subset instability
\newcommand{\matching}[2]{\mathcal{M}\left(#1,\,#2\right)}
\DeclareMathOperator{\allocation}{\mu,\boldsymbol{p}}
\DeclareMathOperator{\bi}{>}
\DeclareMathOperator*{\argmin}{argmin}
\DeclareMathOperator*{\argmax}{argmax}
\DeclareMathOperator{\ALG}{\mathrm{ALG}}
\DeclareMathOperator{\normSI}{\mathcal{J}}
\newtheorem{theorem}{Theorem}[section]
\newtheorem{proposition}[theorem]{Proposition}
\newtheorem{lemma}[theorem]{Lemma}
\newtheorem{definition}[theorem]{Definition}
\newtheorem{remark}[theorem]{Remark}
\newtheorem{example}[theorem]{Example}
\begin{document}
	\thispagestyle{empty}

	\begin{center}
		\LARGE{\textbf{Stability in Online Assignment Games}}
	\end{center}
	\medskip
	
	\begin{center}
		\textbf{Emile Martinez}\quad \textbf{Felipe Garrido-Lucero}\quad \textbf{Umberto Grandi}
	\end{center}
	
	\begin{center}
		IRIT, Université Toulouse Capitole
		
		Toulouse, France
	\end{center}
	\vspace{1cm}
	
	\begin{abstract}
		\noindent The assignment game models a housing market where buyers and sellers are matched, and transaction prices are set so that the resulting allocation is stable. Shapley and Shubik showed that every stable allocation is necessarily built on a maximum social welfare matching. In practice, however, stable allocations are rarely attainable, as matchings are often sub-optimal, particularly in online settings where eagents arrive sequentially to the market. In this paper, we introduce and compare two complementary measures of instability for allocations with sub-optimal matchings, establish their connections to the optimality ratio of the underlying matching, and use this framework to study the stability performances of randomized algorithms in online assignment games.
		\medskip
		
		\noindent\textbf{Keywords}. Online Matching, Assignment Game, Optimality, Stability
	\end{abstract}
	
	\section{Introduction}
	%UG: here we explain the context
	The assignment game of \citet{shapley1971assignment} is a classical model of matching with transfers in which house buyers and house sellers pair together and set transaction prices. The objective is to find a stable allocation, i.e., a matching together with a price vector such that no pair of agents has an incentive to abandon their partners and trade with each other instead. Since its introduction, the model has been extensively studied and generalized \cite{demange_strategy_1985,demange_multi_1986,eriksson_stable_2001,rochford_symmetrically_1984,crawford_job_1981,kelso_jr_job_1982,garrido2025stable,echenique_core_2002,shioura_partnership_2017}.
	
	Shapley and Shubik’s seminal work showed that stable allocations must be based on optimal matchings, i.e., matchings that maximize social welfare. However, in real-life scenarios optimal solutions are rarely attainable, for example in online markets \cite{coles1998marketplaces} and online advertising \cite{mehta2013online}, typically modeled as online matching markets where algorithms like \textit{Greedy} \cite{fisher2009analysis,lehmann2001combinatorial} and \textit{Ranking} \cite{aggarwal2011online,karp1990optimal} are commonly applied. As a result, the implemented solutions are often \textit{unstable}.
	Yet, most existing stability metrics are binary: they simply determine whether a matching is stable or not. This motivates the need for non-binary measures of stability that can assess how close a sub-optimal matching is to being stable, and their posterior application to analyze the stability of online algorithms.
	\medskip
	
	\noindent\textbf{Contributions}. The contributions of the article are as follows.
	\smallskip
	
	\noindent$\bullet$ We leverage three metrics to evaluate allocations arising from sub-optimal matchings: the optimality ratio; the stability index, derived from the subset instability of \citet{jagadeesan2021learning}; and the $\kappa$-approximate core \cite{faigle1998approximate,qiu2016approximate,vazirani2022general} from cooperative game theory.
	\smallskip
	
	\noindent$\bullet$ We prove that, for any allocation, the optimality ratio upper bounds the stability index, which in turn upper bounds the parameter $\kappa$ of the $\kappa$-approximate core. This result refines and strengthens the classical connection between stability and optimality established by Shapley and Shubik.
	\smallskip
	
	\noindent$\bullet$ We show that whenever prices can be set \textit{a posteriori} (either outside online applications or in settings where matching and pricing can be decoupled), the stability index can be maximized to exactly match the optimality ratio of the underlying matching.
	\smallskip
	
	\noindent$\bullet$ We initiate the study of randomized algorithms in \textit{online assignment games}, where either buyers or edges between buyers and sellers arrive sequentially. We evaluate stability at three levels: ex-post (worst-case realization), ex-ante (expected performance), and average (performance under expected utilities). While ex-ante and average guarantees coincide when analyzing social welfare (e.g., the competitive ratio of online algorithms), they diverge under our non-linear stability notions.
	\smallskip
	
	\noindent$\bullet$ Building on existing literature and new results, we establish lower bounds for the stability metrics at the three levels above for both vertex-arrival and edge-arrival models, and complement them with tight examples in several cases.
	\medskip
	
	\noindent\textbf{Related work}. Our article builds on two main literatures: stable matching and online matching. In the former, given agents with (possibly endogenous) preferences, the goal is to form a matching (possibly with additional elements such as prices) such that no pair of agents would prefer each other over their assigned partners. In the latter, the focus is on designing algorithms that make decisions as the market evolves and that guarantee high social welfare, ideally independent of the specific instance.
	
	Despite the extensive literature on stable matchings, most work on uncertainty has focused on settings without transfers. In economics, several authors have studied dynamically stable matchings \cite{bullinger2025stability,liu2023stability,doval2022dynamically} as well as markets with incomplete information about agents’ preferences \cite{bikhchandani2017stability,liu2014stable}. In machine learning, research often considers markets with unknown preferences and addresses them through regret minimization \cite{das2005two,liu2020competing,basu2021beyond,cen2022regret}. More recently, \citet{min2022learn} and \citet{jagadeesan2021learning} studied the assignment game with unknown utility functions, introducing instability metrics as notions of regret and designing reinforcement learning algorithms that achieve sublinear regret bounds.
	
	Online bipartite matching \cite{mehta2013online,echenique_online_2023,huang2024online} is one of the most fundamental problems in the online algorithms literature. It dates back to the seminal work of \citet{karp1990optimal}, who introduced the \textit{Ranking} algorithm and proved its optimality in the unweighted case. Since then, research on online matching has focused on designing increasingly competitive algorithms, that is, algorithms that achieve social welfare closer to that of the offline optimum that knows all arrivals in advance. \citet{aggarwal2011online} extended this result by showing that \textit{Ranking} is also optimal in the vertex-weighted case.
	
	For the more general edge-weighted matching problem, \citet{feldman2009online} introduced the free disposal assumption, noting that without it no randomized algorithm can achieve a constant competitive ratio. Under this assumption, \textit{Greedy} attains a $\nicefrac{1}{2}$-competitive guarantee. \citet{fahrbach2022edge} were the first to surpass this long-standing $\nicefrac{1}{2}$-barrier, a result subsequently improved by several works \cite{blanc2022multiway,shin_et_al,gao2022improved}.
	\medskip
	
	\noindent\textbf{Outline.} The rest of the article is organized as follows. \Cref{sec:the_assignment_game} introduces the Shapley–Shubik assignment game, presenting its main techniques and results. \Cref{sec:optimality_and_stability} presents the stability index and the approximate core, relates them to the optimality ratio, and presents a simple pricing procedure that guarantees a $\nicefrac{1}{2}$ ratio for stability. \Cref{sec:online_stable_matching} initiates the study of randomized algorithms for online assignment games in both edge-arrival and vertex-arrival models. This section defines our three levels of performance guarantees (ex-post, ex-ante, and average) and establishes lower bounds for each metric, with several tight examples. \Cref{sec:conclusions} concludes. The missing proofs are included in Appendix \ref{sec:missing proofs}.
	
	%%%%%%%%%%%%%%%%%%%%%%%%%%%%%%%%%%%%%%%%%%%%%%%%%%%%%%%%%%%%%%%%%%%%%%%%%%
	%%%%%%%%%%%%%%%%%%%%%%%%%%%%%%%%%%%%%%%%%%%%%%%%%%%%%%%%%%%%%%%%%%%%%%%%%%
	%%%%%%%%%%%%%%%%%%%%%%%%%%%%%%%%%%%%%%%%%%%%%%%%%%%%%%%%%%%%%%%%%%%%%%%%%%

	\section{The Assignment Game}\label{sec:the_assignment_game}
	
	The assignment game, in the classical notation of \citet{shapley1971assignment}, consists of a tuple $\Gamma := (B,S,\boldsymbol{h},\boldsymbol{c})$ where $B$ and $S$ are finite agents sets
	which we name buyers and sellers, $\boldsymbol{h} := (h_{i,j})_{i\in B, j \in S}$, where $h_{i,j} \bi 0$ represents the valuation of buyer $i$ for seller $j$'s house, and $\boldsymbol{c} := (c_j)_{j\in S}$, where $c_j \bi 0$ is the valuation of seller $j$ for her house. \Cref{fig:housing_market_example} shows an assignment game example with three buyers and two sellers.
	%UG: you cannot use negative vspace to make the paper shorter
	%\vspace{-0.3cm}
	\begin{figure}[h]
		\centering
		\includegraphics[scale = 0.25]{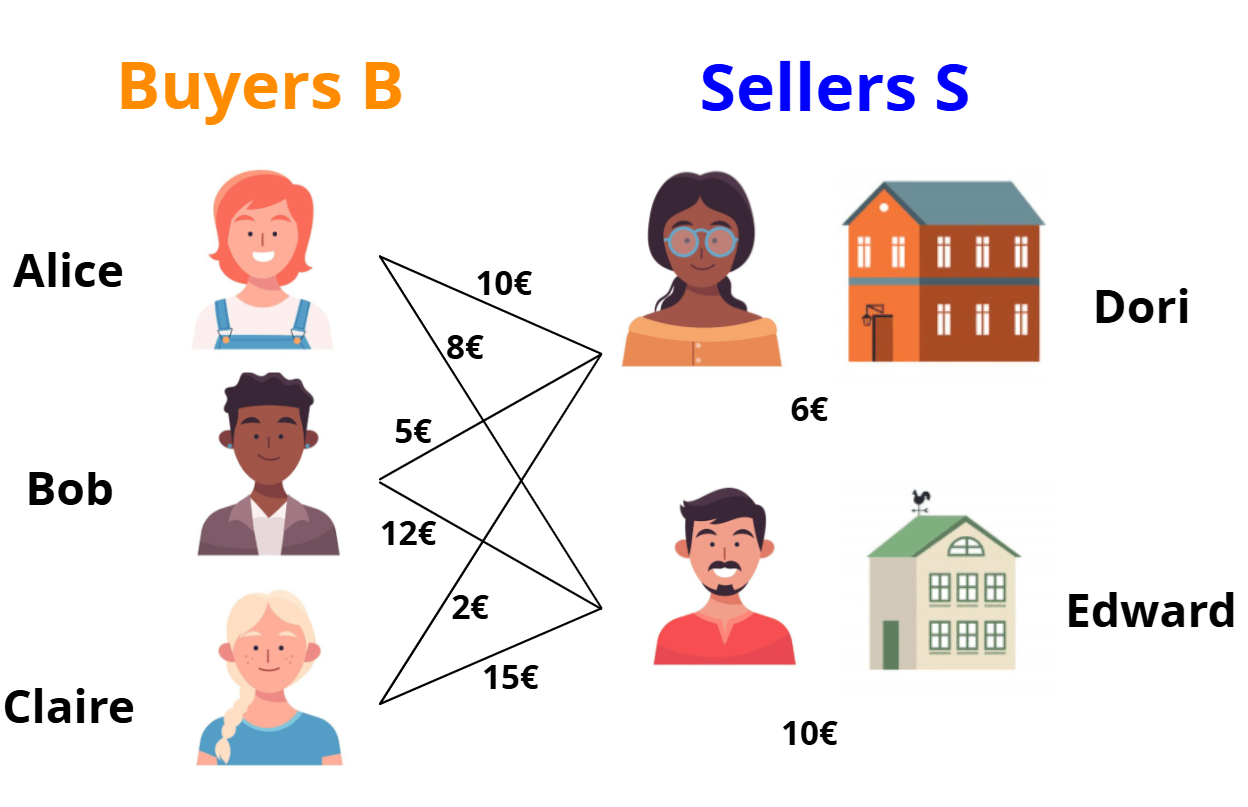}
		\caption{An assignment game instance. Buyers' valuations are denoted over the edges while sellers' valuations are denoted under their houses.}
		\label{fig:housing_market_example}
		% \Description{A housing market example with three buyers (Alice, Bob, and Claire) and two sellers (Dori and Edward). Alice valuates 10 euros Dori's house and 8 euros Edward's house. Bob does 5 and 12 euros respectively. Claire does 2 and 15 euros, respectively. Dori valuates her house 6 euros while Edward 10 euros.}
	\end{figure}
	Throughout the article, we will use $i$ to denote a typical buyer and $j$ to denote a typical a seller.
	
	% \UG{Can we cite from which paper we are following the (I guess standard) notation?}
	
	\begin{definition}
		A \textbf{matching} is an injective function $\mu : B \cup S \to B \cup S$ such that, (1) $\mu\circ \mu = \mathrm{Id}$, (2) for any $i \in B$, $\mu(i) \in S \cup \{i\}$, and (3) for any $j \in S$, $\mu(j) \in B \cup \{j\}$. Whenever $\mu(i) = j$, for  $i \in B$ and $j\in S$, we say that $i$ and $j$ are matched, while for any agent $k \in B\cup S$, such that $\mu(k) = k$, we say that the agent is unmatched.
	\end{definition}
	
	Given a matched pair of agents $(i,j) \in B \times S$, we will alternatively write $i = \mu_j$, $j = \mu_i$, or $(i,j)\in \mu$. Finally, we denote $\mathcal{M}(B,S)$ to the set of all matchings between $B$ and $S$.
	
	\begin{definition}
		An \textbf{allocation} is a pair $(\allocation)$, where $\mu\in \mathcal{M}(B,S)$ is a matching and $\boldsymbol{p}\in\mathbb{R}_+^{S}$ is a price vector.
	\end{definition}
	
	Given an allocation $(\allocation)$ and $j$ a matched seller, $p_j \in \boldsymbol{p}$ represents the price that the buyer $\mu_j$ payed for $j$'s house. By convention, we assume $p_j = c_j$ whenever $j$ is unmatched. 
	
	\begin{definition}
		Given an allocation $(\allocation)$, $i\in B$, and $j \in S$, we define the \textbf{agents' utilities} as
		\begin{align*}
			% \label{eq:utilities}
			u_i(\allocation) &:= \left\{\begin{array}{cc}
				h_{i,\mu_i} - p_{\mu_i} & \text{ if } \mu_i \neq i, \\ 
				0 & \text{ if } \mu_i = i. \\ 
			\end{array}\right.\\
			v_j(\allocation) &:= p_j - c_j.
			% \left\{\begin{array}{cc}
				% p_j - c_j & \text{ if } \mu_j \neq j,\\ 
				% 0 & \text{ if } \mu_j = j. \\ 
				% \end{array}\right.
		\end{align*}
	\end{definition}
	
	For example, suppose that Alice and Dori in \Cref{fig:housing_market_example} are matched and Alice pays 7€. Alice's utility is 3€ while Dori's utility is 1€.
	
	\begin{definition}\label{def:stability}
		An allocation $(\allocation)$ is called \textbf{stable} if it verifies
		\begin{itemize}
			\item $u_i(\allocation)\geq 0$ and $v_j(\allocation) \geq 0$ for any $i\in B$ and $j \in S$.
			\item There is no $p \in \mathbb{R}_+$ and $(i,j)\in B\times S$ such that $h_{i,j} - p \bi u_i(\allocation)$ and $p- c_j \bi v_j(\allocation)$.
		\end{itemize}
	\end{definition}
	
	%UG: I put the itemize here, they anyway will ask us to change if accepted
	
	The first condition of Definition \ref{def:stability} corresponds to individual rationality, while the second one to the non-existence of blocking pairs. In our assignment game example, as illustrated in \Cref{fig:individual_rational}, matching Bob and Edward at price $9$€ is not individually rational, as Edward prefers to be unmatched. Similarly, as illustrated in \Cref{fig:blocking_pair}, matching Bob and Edward at price $11$€ and letting Claire unmatched creates a blocking pair, Claire and Edward, as Claire can offer $12$€ and strictly increase her and Edward's utility.
	
	\begin{figure}[h]
		\begin{subfigure}{0.49\textwidth}
			\centering
			\includegraphics[scale = 0.33]{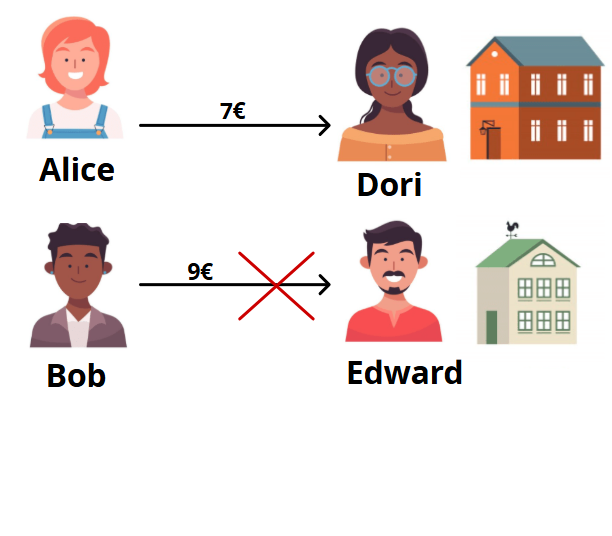}
			\caption{Individual Rationality}
			\label{fig:individual_rational}    
		\end{subfigure}
		\begin{subfigure}{0.49\textwidth}
			\centering
			\includegraphics[scale = 0.33]{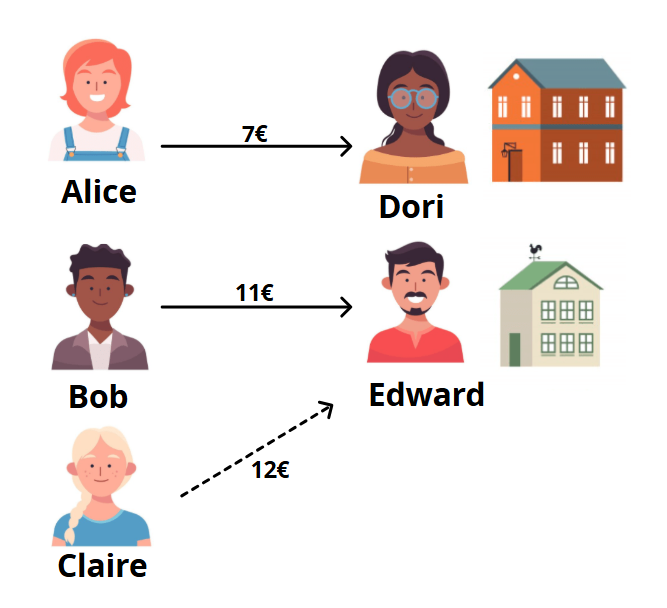}    
			\caption{Blocking Pair}
			\label{fig:blocking_pair}
		\end{subfigure}    
		\caption{Two possible allocations in our assignment game example. The first allocation is not individually rational, while the second allocation has a blocking pair.}
		\label{fig:stability_example}
		% \Description{On the left, Alice gives 7 euros to Dori, and Bob gives 9 euros to Edward for their houses. The latter transaction is not individually rational, as Edward values his house at 10 euros. On the right, Alice gives 7 euros to Dori, Bob gives 11 euros to Edward, and then Claire and Edward form a blocking pair, since at a transaction price of 12 euros, they would both end up better off.}
	\end{figure}
	
	\begin{definition}
		Let $(\allocation)$ be an allocation. We define its \textbf{social welfare} as
		\begin{align*}
			\SW(\allocation) := \sum_{i\in B} u_i(\allocation) + \sum_{j\in S} v_j(\allocation).
		\end{align*}
		Notice that $ \SW(\allocation)$ does not depend on the price vector. In particular, we will alternatively denote it by $ \SW(\mu)$. A matching is called \textbf{optimal} if it maximizes the social welfare.
	\end{definition}
	
	For any pair $(i,j) \in B \times S$, define $a_{i,j} := h_{i,j}-c_j,$ which we refer to as their generated utility, and let $\boldsymbol{a} := (a_{i,j})_{i \in B, j \in S}$.
	%Throughout the article, w
	We assume that $a_{i,j} \geq 0$ for all $(i,j) \in B \times S$. This assumption is made without loss of generality thanks to the individual rationality property, as agents generating negative utility will prefer to remain unmatched rather than matching together.
	% : if a pair yields negative generated utility, we can replace it by $0$ without affecting any of the subsequent results. In particular, since we focus on individually rational allocations, any pair with non-positive generated utility would strictly prefer to remain unmatched.
	
	\begin{remark}
		The notion of generated utility arises from the cooperative game perspective of the assignment game. In this setting, an assignment game can be modeled as pairs of agents forming matches, with each match producing a surplus. This surplus, or generated utility, is then divided between the two parties, reflecting the transferable-utility nature of the model. 
		% Formally, this means that if buyer $i$ is matched with seller $j$, the utilities satisfy $u_i(\allocation) + v_j(\allocation) = a_{i,j}$.
	\end{remark}

	% \UG{say explicitely that it is called generated utility because it is equal to $u_i+v_j$ if matched. Also, the assumption of positive generated utility could be better explained. The underlying assumption actually is that the utility of being unmatched is =0 and that agents prefer being unmatched and being matched to a bad house.}
	
	Shapley and Shubik proved that stable allocations are necessarily based on an optimal matching. Optimal matchings can be obtained by solving a linear program, whose dual program outputs the respective agents utilities. The primal-dual linear programs considered by Shapley and Shubik are given below.
	
	\begin{minipage}{0.49\textwidth}
		\begin{align*}
			(P)\ &\max \sum_{i\in B}\sum_{j\in S} a_{i,j}\cdot x_{i,j}\\
			\text{s.t.}\ &\sum_{j\in S} x_{i,j} \leq 1, \forall i \in B,\\
			&\sum_{i\in B} x_{i,j} \leq 1, \forall j \in S,\\
			&x_{i,j} \in [0,1], \forall (i,j) \in B\times S.
		\end{align*}
	\end{minipage}
	\begin{minipage}{0.49\textwidth}
		\begin{align*}
			(D)\ &\min \sum_{i\in B} \alpha_i + \sum_{j\in S} \beta_j \\
			\text{s.t.}\ &\alpha_i + \beta_j  \geq a_{i,j}, \forall (i,j) \in B\times S,\\
			&\alpha_{i}, \beta_j \geq 0, \forall i \in B, \forall j \in S.\\
		\end{align*}
	\end{minipage}
	\medskip
	
	Note that $(P)$ always allows for integer optimal solutions thanks to the total-unimodularity of the matrix defined by the primal constraints. In particular, we will use interchangeably $\mu$ and $x$ when referring to matchings. 
	
	The utility vectors obtained as solutions of $(D)$, by construction, verify all conditions in Definition \ref{def:stability}. In terms of cooperative game theory, the utility vectors are said to belong to the core.
	
	\begin{definition}\label{def:core}
		Given an assignment game $\Gamma = (B,S,\boldsymbol{h},\boldsymbol{c})$ and its corresponding matrix of generated utility $\boldsymbol{a}$, we define its \textbf{core} as,
		\begin{align*}
			C(\Gamma) := \bigl\{(u,v) \in \mathbb{R}_+^{B}\times\mathbb{R}_+^{S} \mid u_i + v_j \geq a_{i,j}, \forall (i,j)\in B \times S\bigr\}.
		\end{align*}
		% Whenever $(u,v) \in C(\Gamma)$, they are called \textbf{stable}.
	\end{definition}
	
	We state the main result of Shapley-Shubik's seminal article \cite{shapley1971assignment}.
	
	\begin{theorem}\label{teo:Shapley-Shubik}
		Let $\Gamma$ be an assignment game and $(x,\alpha,\beta)$ be solutions of the pair primal-dual linear programs such that $x$ is integral. It follows that $(\alpha,\beta)\in C(\Gamma)$ and there exists a price vector $\boldsymbol{p}$ such that $(x,\boldsymbol{p})$ is a stable allocation, with $u_i(x,\boldsymbol{p}) = \alpha_i$ and $v_j(x,\boldsymbol{p}) = \beta_j$, for any $i \in B, j\in S$.
		
		Conversely, let $(\allocation)$ be a stable allocation. Then, $\mu$ is an optimal matching and $$(u(\allocation), v(\allocation)) \in C(\Gamma).$$
	\end{theorem}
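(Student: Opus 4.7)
The plan is to prove the two directions separately, with LP duality and complementary slackness doing most of the work. For the forward direction, I would start from an integer optimal primal solution $x$ (identified with a matching $\mu$) together with an optimal dual pair $(\alpha,\beta)$, and define the candidate prices by $p_j := c_j + \beta_j$ for every $j\in S$. Nonnegativity of $p_j$ is ensured because $\beta_j\geq 0$ and $c_j\geq 0$. This immediately gives $v_j(\mu,\boldsymbol p)=\beta_j$. For a matched pair $(i,j)$ with $x_{i,j}=1$, complementary slackness on the dual constraint forces $\alpha_i+\beta_j=a_{i,j}$, hence $u_i(\mu,\boldsymbol p)=h_{i,j}-p_j=a_{i,j}-\beta_j=\alpha_i$. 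For an unmatched buyer $i$, complementary slackness on the primal capacity constraint (which is strict since $\sum_j x_{i,j}=0<1$) forces $\alpha_i=0=u_i(\mu,\boldsymbol p)$, matching the convention that unmatched agents get zero utility. Membership of $(\alpha,\beta)$ in the core is then exactly dual feasibility.

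It then remains to show that $(\mu,\boldsymbol p)$ is stable. Individual rationality follows from $\alpha_i,\beta_j\geq 0$. To rule out blocking pairs, suppose toward contradiction that there exist $(i,j)\in B\times S$ and $p\in\mathbb{R}_+$ with $h_{i,j}-p>u_i(\mu,\boldsymbol p)=\alpha_i$ and $p-c_j>v_j(\mu,\boldsymbol p)=\beta_j$; summing these strict inequalities yields $a_{i,j}=h_{i,j}-c_j>\alpha_i+\beta_j$, contradicting dual feasibility.

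For the converse, given any stable allocation $(\mu,\boldsymbol p)$, I would set $\alpha_i:=u_i(\mu,\boldsymbol p)$ and $\beta_j:=v_j(\mu,\boldsymbol p)$, which are nonnegative by individual rationality. The key step is to upgrade the absence of blocking pairs into the core inequality $\alpha_i+\beta_j\geq a_{i,j}$ for every $(i,j)$: otherwise, picking $p:=c_j+\beta_j+\varepsilon$ for sufficiently small $\varepsilon>0$ produces simultaneously $p-c_j>\beta_j$ and $h_{i,j}-p>\alpha_i$, which is a blocking pair. Hence $(\alpha,\beta)$ is dual feasible and lies in $C(\Gamma)$. Writing $x^\mu$ for the integer primal solution induced by $\mu$, one checks that
\begin{equation*}
\sum_{i\in B}\alpha_i+\sum_{j\in S}\beta_j=\mathrm{SW}(\mu)=\sum_{(i,j)\in \mu} a_{i,j}=\sum_{i\in B}\sum_{j\in S}a_{i,j}\,x^\mu_{i,j},
\end{equation*}
so weak LP duality forces $x^\mu$ to be primal optimal and $(\alpha,\beta)$ to be dual optimal; in particular, $\mu$ is an optimal matching.

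The main obstacle I anticipate is the careful bookkeeping around unmatched agents: on the forward side, one has to invoke the right complementary slackness condition to ensure that the identities $u_i(\mu,\boldsymbol p)=\alpha_i$ and $v_j(\mu,\boldsymbol p)=\beta_j$ hold for agents outside $\mu$, and on the converse side, one must verify that the construction $p=c_j+\beta_j+\varepsilon$ produces an admissible (nonnegative) deviation price, which is guaranteed here by $c_j\geq 0$ and $\beta_j\geq 0$.
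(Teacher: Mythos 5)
The paper states this theorem as a known result of \citet{shapley1971assignment} and gives no proof of its own, so there is nothing to diverge from: your argument is the standard LP-duality proof and it is correct. Both directions are sound --- complementary slackness handles the matched and unmatched agents in the forward direction (including forcing $\beta_j=0$ for unmatched sellers, consistent with the convention $p_j=c_j$), the blocking-pair sum argument is exactly dual feasibility, and in the converse the identity $\sum_i\alpha_i+\sum_j\beta_j=\SW(\mu)$ combined with weak duality correctly yields optimality of $\mu$.
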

	
	Note that the prices vector $\boldsymbol{p}$ in the first part of \Cref{teo:Shapley-Shubik}'s statement are trivially given by,
	\begin{align*}
		p_j := \beta_j + c_j, \text{ for any } j \in S,
	\end{align*}
	as, by definition, the sellers' utility is given by the difference between the received payment and their valuation.
	
	\begin{example}
		In our running example, the optimal matching corresponds to matching Alice with Dori, Claire with Edward, and letting Bob unmatched. In addition, the optimal utilities can be defined for example by Alice paying $6$€ to Dori and Claire paying $12$€ to Edward.
	\end{example}
	
	% \UG{As a matter of taste I would avoir using so much boldface (I would not use it at all).}
	
	%%%%%%%%%%%%%%%%%%%%%%%%%%%%%%%%%%%%%%%%%%%%%%%%%%%%%%%%%%%%%%%%%%%%%%%%%%
	%%%%%%%%%%%%%%%%%%%%%%%%%%%%%%%%%%%%%%%%%%%%%%%%%%%%%%%%%%%%%%%%%%%%%%%%%%
	%%%%%%%%%%%%%%%%%%%%%%%%%%%%%%%%%%%%%%%%%%%%%%%%%%%%%%%%%%%%%%%%%%%%%%%%%%

	\section{Sub-Optimality and Stability}\label{sec:optimality_and_stability}
	
	In real-world applications such as online markets, optimal matchings are rarely observed due to factors such as the sequential arrival of agents or the incomplete knowledge of their valuations. 
	As stated by Shapley-Shubik \cite{shapley1971assignment} (cf. \Cref{teo:Shapley-Shubik}), no allocation can be stable if its underlying matching is sub-optimal. 
	Hence, the purpose of this section is to measure the stability of sub-optimal matchings. To this end, we introduce two metrics: the stability index, based on the subset instability of \citet{jagadeesan2021learning}, and the approximate core \cite{faigle1998approximate,qiu2016approximate,vazirani2022general}. Each notion has its advantages and limitations. The stability index is comparatively easier to maximize, but it is sensitive to utility scaling and thus tends to focus on pairs that generate higher utility. The approximate core, while harder to achieve, avoids this bias and provides a more balanced measure of stability. 
	We formalize this intuition by proving that for any allocation, the distance to the core is bounded by the stability index. Moreover, we prove that both stability measures are bounded by the optimality ratio of the underlying matching, showing that by improving stability we indirectly build optimal allocations (cf. \Cref{eq:three_metrics}). First, we formalize the concept of sub-optimal matching.
	
	\begin{definition}
		Given an assignment game $\Gamma$ and an allocation $(\allocation)$, we define its \textbf{optimality ratio} as
		\begin{align*}
			\lambda(\allocation) = \lambda(\mu) := \frac{\SW(\mu)}{\opt},
		\end{align*}
		where $\opt$ denotes the social welfare of an optimal matching, that is, the solution of the primal problem $(P)$. 
	\end{definition}
	
	% \EM{Souvent la def de $\lambda$-optimal c'est avec un $\geq \lambda$ et pas un $=$. Mais après définir $\lambda_m = \frac{\SW(\mu)}{\opt}$ peut être utile.}
	
	\subsection{Stability Index}
	
	The stability index is based on the concept of subset instability \cite{jagadeesan2021learning}, originally introduced as a notion of regret, that measures, for every sub-coalition, the difference between the social welfare it actually obtained and the maximum it could have obtained.
	
	\begin{definition}
		We define the \textbf{stability index} of a given allocation $(\allocation)$ as
		\begin{align*}
			\normSI(\allocation) := 1 - \frac{\SI(\allocation)}{\opt},
		\end{align*}
		with $\SI(\allocation)$ the subset instability of $(\allocation)$, formally defined as 
		\begin{align*}
			\SI(\mu,\mathbf{p}) := \max_{(B',S')\subseteq (B, S)} \max_{\mu' \in \mathcal{M}(B',S')}\{ \SW(\mu') -  \SW|_{B',S'}(\allocation)\},
		\end{align*}
		where $\mu'$ matches only agents from $B'$ to $S'$ and $ \SW|_{B',S'}(\mu, \mathbf{p})$ denotes the social welfare of $(\allocation)$ restricted to the sub-market $(B',S')$, that is,
		\begin{align*}
			\SW|_{B',S'}(\mu, \mathbf{p}) := \sum_{i\in B'} u_i(\allocation) + \sum_{j\in S'} v_j(\allocation).
		\end{align*}
		% Given an allocation $(\allocation)$, we define its \textbf{stability index} by,
		% \begin{align*}
			%     \normSI(\allocation) := 1 - \frac{\SI(\allocation)}{\opt}.
			% \end{align*}
		% and, whenever $\normSI(\allocation) = t$, we say $(\allocation)$ is $t$-stable.
	\end{definition}
	
	% \EM{$\varepsilon$-stable est utilisé donc, pas dans notre article mais avec la littérature, il peut y avoir surcharge de ce que veut dire $0.5$-stable.}
	% \FG{Je crois pas. Quand on parle de eps-stable, rarement on va donner une valeur à eps. Par contre, si l'on dit 0.5-stable, on va plutot penser à un truc du type lambda-competitive.}
	
	Note that subset instability does not require a price vector on the sub-market $(B',S')$ as prices cancel each other when computing the social welfare $\SW(\mu')$. However, since $(\allocation)$ is an allocation on the whole market $(B, S)$, the prices do not necessarily cancel out in $ \SW|_{B',S'}(\mu, \mathbf{p})$.
	
	\begin{example}\label{ex:subset_instability}
		Consider the allocation $(\allocation)$ illustrated in \Cref{fig:subset_inst}. It follows that $\lambda(\allocation) = \frac{2}{3}$ as both couples generate, respectively, $4$€ and $2$€, while the optimal matching has social welfare equal to $9$€. Regarding the stability index, note that considering the submarket defined by Alice, Claire, Dori, and Edward, and matching Alice with Dori and Claire with Edward, the difference on social welfare is equal to $4$€. Since no other combination of submarket and matching creates a higher difference, $\SI(\allocation) = 4$€, and, thus, $\normSI(\allocation) = \frac{5}{9}$. 
		\begin{figure}[H]
			\centering
			\includegraphics[scale = 0.3]{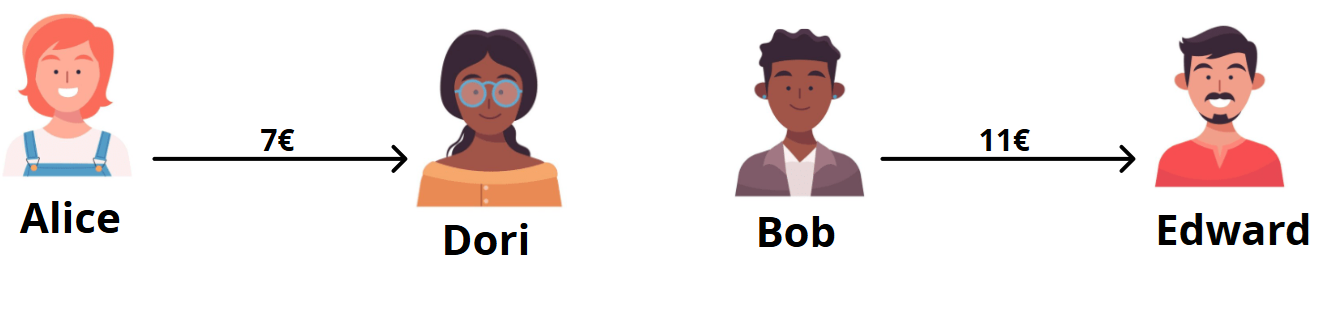}
			\caption{An allocation with $\normSI(\allocation) = \frac{5}{9}$.}
			\label{fig:subset_inst}
			% \Description{Alice gives 7 euros to Dori and Bob gives 11 euros do Edward, generating an allocation whose stability index is 5/9.}
		\end{figure}
	\end{example}
	
	\citet{jagadeesan2021learning} showed that the subset instability of a given allocation is always at least as large as the additive optimality gap of its matching. Consequently, the stability index is always upper-bounded by the optimality ratio, as illustrated in Example~\ref{ex:subset_instability}. As a complementary result, we show that by carefully choosing the prices of an allocation, the stability index can be maximized to exactly match the optimality ratio. To this end, we consider the notion of a stabilizing subsidy and recall a technical result, both drawn from \citet{jagadeesan2021learning}.
	
	% The inequality was proved in \citet{jagadeesan2021learning}. To prove the second part of the theorem, we introduce the notion of stabilizing subsidy and state a technical result, both of them included in \cite{jagadeesan2021learning}.
	
	\begin{definition}\label{def:minimum_stab_subsidy}
		Let $(\allocation)$ be an allocation. We define the \textbf{minimum stabilizing subsidy} as the solution of the following problem,
		\begin{align*}
			\min_{(\boldsymbol{\tau},\boldsymbol{\eta})\in\mathbb{R}_+^{B}\times\mathbb{R}_+^{S}} &\sum_{i\in B} \tau_i + \sum_{j\in S} \eta_j & \\
			\text{s.t.}\ &u_i(\allocation) + \tau_i \geq 0, &\forall i \in B,\\
			&v_j(\allocation) + \eta_j \geq 0,  &\forall j \in S,\\
			&u_i(\allocation) + \tau_i + v_j(\allocation) + \eta_j \geq a_{i,j}, &\forall (i,j) \in B \times S.
		\end{align*}
	\end{definition}
	
	A minimum stabilizing subsidy corresponds to the minimum utility injection required to make the allocation $(\allocation)$ stable.
	
	\begin{lemma}\label{lemmma:subset_inst_and_stab_subsidy}
		For any allocation, the minimum stabilizing subsidy is equal to the subset instability.
	\end{lemma}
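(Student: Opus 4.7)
The plan is to prove the equality by LP duality. I will (i) dualize the LP defining the minimum stabilizing subsidy, (ii) argue that the dual has an integer optimum by reducing it to a bipartite matching LP, and (iii) identify integer dual solutions with sub-market / matching pairs that realize the subset instability.

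Associating nonnegative dual variables $y_i$, $z_j$, $x_{i,j}$ to the three families of constraints in Definition~\ref{def:minimum_stab_subsidy}, standard LP duality yields (writing $u_i := u_i(\allocation)$ and $v_j := v_j(\allocation)$)
\begin{align*}
(D')\ &\max\ \sum_{(i,j)\in B\times S}(a_{i,j}-u_i-v_j)\,x_{i,j} \;-\; \sum_{i\in B} u_i\,y_i \;-\; \sum_{j\in S} v_j\,z_j\\
&\text{s.t.}\ y_i + \sum_{j\in S} x_{i,j} \leq 1,\ \forall\, i\in B;\quad z_j + \sum_{i\in B} x_{i,j} \leq 1,\ \forall\, j\in S;\quad x,y,z\geq 0.
\end{align*}
By strong duality the optimum of $(D')$ equals the minimum stabilizing subsidy.

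Next, I will recast $(D')$ as the fractional max-weight matching LP on an augmented bipartite graph: starting from $K_{B,S}$ with edge weights $a_{i,j}-u_i-v_j$, I attach to each $i\in B$ a private right-neighbour $i'$ via an edge of weight $-u_i$ (carrying the variable $y_i$), and symmetrically attach to each $j\in S$ a private left-neighbour $j'$ via an edge of weight $-v_j$ (carrying $z_j$). Total unimodularity of bipartite incidence matrices then guarantees an integer optimum $(x^\ast, y^\ast, z^\ast)$ of $(D')$.

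Finally, I will read the subset-instability witness off the integer optimum. Set $\mu' := \{(i,j) : x^\ast_{i,j}=1\}$, $B' := \{i : y^\ast_i=1\}\cup\{i : \exists\, j,\ x^\ast_{i,j}=1\}$, and $S' := \{j : z^\ast_j=1\}\cup\{j : \exists\, i,\ x^\ast_{i,j}=1\}$. The LP constraints force these unions to be disjoint and $\mu'\in\mathcal{M}(B',S')$; a direct rearrangement, using $a_{i,j}=(a_{i,j}-u_i-v_j)+u_i+v_j$ on each matched pair, shows the dual objective equals
\begin{align*}
\sum_{(i,j)\in\mu'} a_{i,j} \;-\; \sum_{i\in B'} u_i \;-\; \sum_{j\in S'} v_j \;=\; \SW(\mu') - \SW|_{B',S'}(\allocation).
\end{align*}
Every triple $(B',S',\mu')$ conversely induces a feasible integer dual solution with the same objective, so $\max(D') = \SI(\allocation)$, which combined with strong duality gives the claim. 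The main obstacle I anticipate is the integrality step: writing the dual is mechanical, but recognising $(D')$ as a bipartite matching LP requires the explicit augmentation with dummy vertices that absorb the variables $y_i$ and $z_j$. Once that reduction is in place, the rest is a clean dictionary between integer dual solutions and triples $(B',S',\mu')$.
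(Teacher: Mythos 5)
Your proof is correct and complete: the dual $(D')$ is derived correctly, the augmentation with dummy vertices does turn its constraint system into a bipartite (fractional) matching polytope whose total unimodularity yields an integral optimum, and the dictionary between integral dual solutions and triples $(B',S',\mu')$ — with $y_i^\ast=1$ (resp.\ $z_j^\ast=1$) encoding agents of $B'$ (resp.\ $S'$) left unmatched by $\mu'$ — exactly matches the two sides of the definition of $\SI(\allocation)$. Note that the paper gives no proof of \Cref{lemmma:subset_inst_and_stab_subsidy}: it is recalled from \citet{jagadeesan2021learning}, and your LP-duality argument is essentially the one in that reference; it is also the same machinery the paper itself deploys, in Lagrangian form and with prices as additional variables, in the proof of \Cref{thm:opt_gap_and_subset_inst}.
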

	
	We are now able to show the following result.
	
	\begin{theorem}\label{thm:opt_gap_and_subset_inst}
		For any allocation $(\allocation)$, it always holds $\normSI(\allocation)\leq\lambda(\mu)$. Moreover, for any matching $\mu$, there exists $\boldsymbol{p}\in\mathbb{R}_+^{S}$ such that, $(\allocation)$ is individually rational and $\normSI(\allocation) = \lambda(\mu)$.
	\end{theorem}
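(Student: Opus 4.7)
The plan is to prove the two halves of the statement in turn. The inequality $\normSI(\allocation) \leq \lambda(\mu)$ is immediate: specializing the outer maxima in the definition of $\SI$ at $(B',S') = (B,S)$ and $\mu' = \mu^*$ for any optimal matching $\mu^*$ gives
\begin{equation*}
    \SI(\allocation) \geq \SW(\mu^*) - \SW(\allocation) = \opt - \SW(\mu),
\end{equation*}
and dividing by $\opt$ and rearranging yields $\normSI(\allocation) \leq \SW(\mu)/\opt = \lambda(\mu)$.

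For the equality part, I would use \Cref{lemmma:subset_inst_and_stab_subsidy}, which identifies $\SI(\allocation)$ with the optimal value of the minimum stabilizing subsidy LP, and then design prices for which that LP has value at most $\opt - \SW(\mu)$. Given a matching $\mu$, let $(\alpha^*,\beta^*)$ be an optimal dual solution as in \Cref{teo:Shapley-Shubik}, so that $\alpha^*_i + \beta^*_j \geq a_{i,j}$ for every $(i,j) \in B \times S$, $\alpha^*_i,\beta^*_j \geq 0$, and $\sum_{i} \alpha^*_i + \sum_{j} \beta^*_j = \opt$. For each $(i,j) \in \mu$ I set $p_j := c_j + \min(a_{i,j},\beta^*_j)$, and $p_j := c_j$ whenever $j$ is unmatched. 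The induced utilities are then $v_j = \min(a_{i,j},\beta^*_j) \in [0,\beta^*_j]$ and $u_i = \max(0,a_{i,j} - \beta^*_j) \in [0,\alpha^*_i]$, where the upper bound on $u_i$ uses dual feasibility; hence $(\allocation)$ is individually rational.

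To certify the desired subsidy value, I then define $\tau_i := \alpha^*_i - u_i \geq 0$ and $\eta_j := \beta^*_j - v_j \geq 0$. Every constraint of \Cref{def:minimum_stab_subsidy} reduces either to $\alpha^*_i + \beta^*_j \geq a_{i,j}$ or to $\alpha^*_i,\beta^*_j \geq 0$, so $(\tau,\eta)$ is feasible, and the total subsidy equals
\begin{equation*}
    \sum_{i\in B} \tau_i + \sum_{j\in S} \eta_j = \Bigl(\sum_{i} \alpha^*_i + \sum_{j} \beta^*_j\Bigr) - \SW(\mu) = \opt - \SW(\mu).
\end{equation*}
Applying \Cref{lemmma:subset_inst_and_stab_subsidy} gives $\SI(\allocation) \leq \opt - \SW(\mu)$, and combined with the first half this forces equality, that is, $\normSI(\allocation) = \lambda(\mu)$.

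The main subtlety I expect is the price construction, specifically showing that the induced $(u,v)$ can be dominated componentwise by an optimal dual pair $(\alpha^*,\beta^*)$ despite $\mu$ being sub-optimal. The key observation is that sub-optimality leaves slack in the dual inequalities $\alpha^*_i + \beta^*_j \geq a_{i,j}$ outside the support of an optimal matching, so the surplus $a_{i,j}$ of every pair in $\mu$ can always be split into parts lying beneath $\alpha^*_i$ and $\beta^*_j$ respectively, which is exactly what the choice $v_j = \min(a_{i,j},\beta^*_j)$ exploits.
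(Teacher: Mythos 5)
Your proof is correct, but it takes a genuinely different and more constructive route than the paper's. For the inequality $\normSI(\allocation)\leq\lambda(\mu)$ the paper simply cites \citet{jagadeesan2021learning}, whereas you instantiate the maxima at $(B',S')=(B,S)$ and an optimal $\mu'$, which is a clean self-contained argument. For the equality, the paper formulates $\min_{\boldsymbol{p}}\SI(\mu,\boldsymbol{p})$ as a linear program over $(\boldsymbol{p},\boldsymbol{\tau},\boldsymbol{\eta})$, passes to its Lagrangian dual, and after several simplifications recognizes the dual as the assignment LP $(P)$ shifted by $-\SW(\mu)$, yielding $\min_{\boldsymbol{p}}\SI(\mu,\boldsymbol{p})=\opt-\SW(\mu)$; individual rationality of the minimizing prices is then handled by a separate appendix lemma that truncates prices to $[c_j, h_{\mu_j,j}]$. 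You instead exhibit explicit prices $p_j=c_j+\min(a_{\mu_j,j},\beta^*_j)$ built from an optimal Shapley--Shubik dual pair $(\alpha^*,\beta^*)$, verify componentwise domination $u_i\leq\alpha^*_i$, $v_j\leq\beta^*_j$ via dual feasibility, and certify $\SI(\allocation)\leq\opt-\SW(\mu)$ by producing the feasible subsidy $(\alpha^*-u,\beta^*-v)$ for the LP of \Cref{def:minimum_stab_subsidy}, invoking \Cref{lemmma:subset_inst_and_stab_subsidy}; combined with your lower bound this forces equality. Your approach buys an explicit, individually rational price vector in one stroke (no appendix lemma needed) and sidesteps a subtle point in the paper's argument, namely that the displayed weak-duality inequality only bounds the minimum from below and implicitly relies on LP strong duality to close the gap --- your two-sided sandwich avoids this entirely. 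One minor remark: your closing paragraph about ``slack outside the support of an optimal matching'' is unnecessary; the splitting of $a_{i,\mu_i}$ beneath $\alpha^*_i$ and $\beta^*_{\mu_i}$ follows from dual feasibility $\alpha^*_i+\beta^*_j\geq a_{i,j}$ alone, which holds for every pair regardless of optimality of $\mu$.
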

	
	\begin{proof}
		The upper bound for the stability index is a consequence of the results of \citet{jagadeesan2021learning}. We focus then on proving the second part of the statement. As observed by \citet{shi2023optimal}, without loss of generality, we can assume that $|B|=|S|$ and that $\mu$ is a maximum size matching, i.e, no agent in $B\cup S$ is unmatched. By Lemma \ref{lemmma:subset_inst_and_stab_subsidy}, minimizing the subset instability over the prices vector can be written as
		% \begin{align*}
			% &\min_{(\boldsymbol{p},\boldsymbol{\tau},\boldsymbol{\eta})\in\mathbb{R}_+^{|S|}\times\mathbb{R}_+^{|B|}\times\mathbb{R}_+^{|S|}} \sum_{i\in B} \tau_i + \sum_{j\in S} \eta_j &\\
			%     \qquad\text{s.t.}\ &u_i(\allocation) + \tau_i \geq 0, &\forall i \in B,\\
			%     &v_j(\allocation) + \eta_j \geq 0, &\forall j \in S,\\
			%     &u_i(\allocation) + \tau_i + v_j(\allocation) + \eta_j \geq a_{i,j}, &\forall (i,j) \in B \times S.
			% \end{align*}
		\begin{align*}
			&\hspace{-1.5cm}\min_{(\boldsymbol{p},\boldsymbol{\tau},\boldsymbol{\eta})\in\mathbb{R}_+^{S}\times\mathbb{R}_+^{B}\times\mathbb{R}_+^{S}} \sum_{i\in B} \tau_i + \sum_{j\in S} \eta_j &\\
			\hspace{1cm}\text{s.t.}\ &h_{i,\mu_i} - p_{\mu_i} + \tau_i \geq 0, &\forall i \in B,\\
			&p_j - c_j + \eta_j \geq 0, &\forall j \in S,\\
			&h_{i,\mu_i} - p_{\mu_i} + \tau_i + p_j - c_j + \eta_j \geq a_{i,j}, &\forall (i,j) \in B \times S.
		\end{align*}
		The dual of the previous problem is (see Appendix \ref{sec:missing proofs}),
		\begin{align*}
			\max_{\boldsymbol{\alpha},\boldsymbol{\beta},\boldsymbol{\gamma} \geq 0} &\sum_{i\in B}\sum_{j\in S} \gamma_{i,j}(h_{i,j} - h_{i,\mu_i}) + \sum_{j\in S}\beta_j c_j  -\sum_{i\in B}\alpha_i h_{i,\mu_i} \\
			\text{s.t.}\ & \alpha_i + \sum_{j\in S}\gamma_{i,j} = 1 , \forall i \in B,\\
			&\beta_j + \sum_{i\in B}\gamma_{i,j} \leq 1, \forall j \in S.
		\end{align*}
		Using the first constraint, note that the objective function becomes:
		$$ \sum_{i\in B}\sum_{j\in S} \gamma_{i,j}h_{i,j} + \sum_{j\in S}\beta_j c_j  -\sum_{i\in B}h_{i,\mu_i}.$$
		
		% Since none of the variables within $\boldsymbol{\alpha}$ appear on the objective function, they correspond to slack variables, so the problem becomes,
		% \begin{align*}
			% \max &\sum_{i\in B}\sum_{j\in S} \gamma_{i,j}h_{i,j} + \sum_{j\in S}\beta_j c_j  -\sum_{i\in B}h_{i,\mu_i} \\
			%     \text{s.t.}\ & \sum_{j\in S}\gamma_{i,j} \leq 1 , \forall i \in B,\\
			%     &\beta_j + \sum_{i\in B}\gamma_{i,j} \leq 1, \forall j \in S,\\
			%     &\boldsymbol{\beta},\boldsymbol{\gamma} \geq 0. 
			% \end{align*}
		
		% \EM{Former version, just to try to save some spaces, removing the first linear problem, as only the objective function is changing:
			
			% Plugging the first class of constraints into the objective function, the problem becomes,
			% \begin{align*}
				% \max &\sum_{i\in B}\sum_{j\in S} \gamma_{i,j}h_{i,j} + \sum_{j\in S}\beta_j c_j  -\sum_{i\in B}h_{i,\mu_i} \\
				%     \text{s.t.}\ & \alpha_i + \sum_{j\in S}\gamma_{i,j} = 1 , \forall i \in B,\\
				%     &\beta_j + \sum_{i\in B}\gamma_{i,j} \leq 1, \forall j \in S,\\
				%     &\boldsymbol{\alpha},\boldsymbol{\beta},\boldsymbol{\gamma} \geq 0 
				% \end{align*}
			% Since none of the variables within $\boldsymbol{\alpha}$ appear in the objective function, they correspond to slack variables, so the problem is finally written as,
			% \begin{align*}
				% \max &\sum_{i\in B}\sum_{j\in S} \gamma_{i,j}h_{i,j} + \sum_{j\in S}\beta_j c_j  -\sum_{i\in B}h_{i,\mu_i} \\
				%     \text{s.t.}\ & \sum_{j\in S}\gamma_{i,j} \leq 1 , \forall i \in B,\\
				%     &\beta_j + \sum_{i\in B}\gamma_{i,j} \leq 1, \forall j \in S,\\
				%     &\boldsymbol{\beta},\boldsymbol{\gamma} \geq 0. 
				% \end{align*}\enspace }
		
		\noindent The coefficients of $\boldsymbol{\beta}$ in the objective function being positive, the value of the problem is not modified by replacing the second class of constraints by
		$$ \beta_j + \sum_{i\in B}\gamma_{i,j} = 1, \forall j \in S.$$
		Using this, the objective function becomes
		$$ \sum_{i\in B}\sum_{j\in S} \gamma_{i,j}(h_{i,j} - c_j) + \sum_{j\in S} c_j  - \sum_{i\in B}h_{i,\mu_i}. $$
		Since none of the variables within $\boldsymbol{\alpha}$ and $\boldsymbol{\beta}$ appear in the objective function, they correspond to slack variables. Furthermore, as all agents are matched, it holds,
		\begin{align*}
			&\sum_{i\in B}h_{i,\mu_i} - \sum_{j\in S} c_j = \sum_{(i,j)\in \mu}h_{i,j} - c_j =  \SW(\allocation). %&\qedhere\square
		\end{align*}
		Thus, the problem is finally written as,
		\begin{align*}
			\max & \sum_{i\in B}\sum_{j\in S} \gamma_{i,j}(h_{i,j}-c_j) - \SW(\mu) \\
			\text{s.t.}\ & \sum_{j\in S}\gamma_{i,j} \leq 1 , \forall i \in B,\\
			& \sum_{i\in B}\gamma_{i,j} \leq 1, \forall j \in S,\\
			&\boldsymbol{\beta},\boldsymbol{\gamma} \geq 0. 
		\end{align*}
		The resulting problem corresponding to problem (P) shifted of $ - \SW(\mu)$, we obtain that, $$\min_{\boldsymbol{p}}\SI(\mu, \boldsymbol{p}) = \max_{\mu' \in \matching{B}{S}} \SW(\mu') - \SW(\mu) = \opt - \SW(\mu).$$
		Regarding the choice of prices such that the corresponding allocation is individually rational, please refer to Appendix \ref{sec:missing proofs}.
		\qedhere
	\end{proof}
	
	As shown by \Cref{thm:opt_gap_and_subset_inst}, whenever prices can be chosen after the matching has been computed, the stability index can be maximized to exactly match the optimality ratio. While this represents an important theoretical bound, in most of online matching applications, prices must be decided at the same time that couples are matched, in particular, ignoring the future pairs to be created. The following result provides a simple method to ensure a $\nicefrac{1}{2}$-guarantee in such cases.
	
	\begin{proposition}\label{prop:half_prices_is_half_stable}
		Let $\mu$ be a matching. Consider the price vector $\boldsymbol{p}^{\it half} := (p_j^{\it half})_{j\in S} \in \mathbb{R}_+^S$, defined by,
		\begin{align*}
			p_j^{\it half} := \left\{ 
			\begin{array}{cc}
				c_j + \frac{1}{2}\cdot a_{i,j} & \text{ if } \mu_j = i,\\
				c_j     & \text{ if } \mu_j = j.
			\end{array}
			\right.    
		\end{align*}
		It follows that $\frac{1}{2}\cdot \lambda(\mu) \leq \normSI(\mu, \boldsymbol{p}^{\it half})$.
	\end{proposition}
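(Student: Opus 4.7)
The plan is to show the equivalent inequality $\SI(\mu, \boldsymbol{p}^{\it half}) \leq \opt - \tfrac{1}{2}\SW(\mu)$, which after dividing by $\opt$ is exactly $\normSI(\mu, \boldsymbol{p}^{\it half}) \geq \tfrac{1}{2}\lambda(\mu)$. Throughout, I exploit the explicit form of $\boldsymbol{p}^{\it half}$: for each $(i_0,j_0)\in\mu$, $u_{i_0}(\mu, \boldsymbol{p}^{\it half}) = v_{j_0}(\mu, \boldsymbol{p}^{\it half}) = \tfrac{1}{2} a_{i_0,j_0}$, while all unmatched agents have zero utility. In particular, the allocation is individually rational.

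The first step is to simplify the sub-market maximization defining $\SI$. Since all utilities in $(\mu, \boldsymbol{p}^{\it half})$ are non-negative, for any fixed matching $\mu'$ the quantity $\SW|_{B',S'}(\mu, \boldsymbol{p}^{\it half})$ is non-decreasing in $(B',S')$, so the adversary optimally picks $B'$ and $S'$ as small as possible, namely the endpoint sets of $\mu'$. Rewriting the resulting sum over $\mu'$-endpoints as a sum over $\mu$-edges, I get
\begin{align*}
\SI(\mu, \boldsymbol{p}^{\it half}) \;=\; \max_{\mu'\in\mathcal{M}(B,S)} \Bigl\{ \SW(\mu') - \tfrac{1}{2}\!\!\sum_{(i_0,j_0)\in\mu}\!\! a_{i_0,j_0}\bigl(\mathbb{1}[i_0 \text{ matched in } \mu'] + \mathbb{1}[j_0 \text{ matched in } \mu']\bigr) \Bigr\}.
\end{align*}

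The second step is a WLOG augmentation: I argue that a maximizer $\mu^\star$ can be chosen so that every edge of $\mu$ has at least one endpoint matched in $\mu^\star$. If some $(i_0,j_0)\in\mu$ has both endpoints unmatched in $\mu^\star$, then $\mu^\star\cup\{(i_0,j_0)\}$ is still a matching; adding this edge increases $\SW(\mu^\star)$ by $a_{i_0,j_0}$, and simultaneously increases the subtracted term by $\tfrac{1}{2}(1+1)\,a_{i_0,j_0} = a_{i_0,j_0}$, leaving the objective unchanged. Performing this operation greedily over all such $\mu$-edges (which are pairwise vertex-disjoint) yields the claimed $\mu^\star$ without changing the value of $\SI$.

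The final step combines these. Under the augmented $\mu^\star$, every $\mu$-edge contributes an indicator sum of at least $1$, so the subtracted term is at least $\tfrac{1}{2}\sum_{(i_0,j_0)\in\mu} a_{i_0,j_0} = \tfrac{1}{2}\SW(\mu)$. Combined with $\SW(\mu^\star)\leq\opt$, this gives $\SI(\mu, \boldsymbol{p}^{\it half}) \leq \opt - \tfrac{1}{2}\SW(\mu)$, and dividing by $\opt$ completes the proof. The only delicate point is the augmentation argument, which has to preserve both the matching constraint and the objective exactly; the rest reduces to careful bookkeeping. (Alternatively, one could invoke \Cref{lemmma:subset_inst_and_stab_subsidy} to recast $\SI$ as the stabilizing-subsidy LP, note that the individual-rationality constraints are inactive under $\boldsymbol{p}^{\it half}$, and apply LP duality to obtain a max-weight matching problem with weights $(a_{i,j}-u_i-v_j)^+$; the same augmentation argument then yields the bound.)
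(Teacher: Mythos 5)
Your proof is correct, but it takes a genuinely different route from the paper's. The paper's argument benchmarks $\boldsymbol{p}^{\it half}$ against an instability-minimizing, individually rational price vector $\boldsymbol{p}^*$ whose existence and the identity $\SI(\mu,\boldsymbol{p}^*) = \opt - \SW(\mu)$ come from \Cref{thm:opt_gap_and_subset_inst}: it shows $u_i(\mu,\boldsymbol{p}^{\it half}) \geq \tfrac{1}{2}u_i(\mu,\boldsymbol{p}^*)$ and $v_j(\mu,\boldsymbol{p}^{\it half}) \geq \tfrac{1}{2}v_j(\mu,\boldsymbol{p}^*)$ pointwise and then propagates this through the sub-market maximization. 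Your argument instead works directly with the explicit equal split $u_{i_0} = v_{j_0} = \tfrac{1}{2}a_{i_0,j_0}$, reduces the adversary's sub-market to the endpoints of $\mu'$ (valid since the allocation is individually rational), and uses a clean augmentation step to force every $\mu$-edge to have a covered endpoint; all steps check out, including the observation that adding a doubly-exposed $\mu$-edge changes the gain and the penalty by exactly $a_{i_0,j_0}$ each, and that $\mu$-edges are vertex-disjoint so the greedy augmentation is well defined. What your approach buys is self-containedness: it does not invoke the LP-duality machinery behind \Cref{thm:opt_gap_and_subset_inst}, and it yields the same key intermediate bound $\SW(\mu') - \SW|_{B',S'}(\mu,\boldsymbol{p}^{\it half}) \leq \opt - \tfrac{1}{2}\SW(\mu)$ that the appendix reuses for the randomized version (\Cref{prop:randomized_half_prices}). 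What the paper's approach buys is brevity once \Cref{thm:opt_gap_and_subset_inst} is available, and a comparison-to-optimal-prices template that is conceptually aligned with the rest of \Cref{sec:optimality_and_stability}.
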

	
	% \UG{The notation $p^{1/2}$ is not great, it could be interpreted as $\sqrt{p}$. Why not $p^{\it split}$ or $p^{\it half}$ to remind that the generated value is split equally?}
	
	% \UG{This result is probably misplaced, since the online setting was not defined yet.}
	% \FG{I changed the discussion prior the proposition. This result is general in the sense that can be applied as an online pricing mechanism but not only, as discussed later . Still, I agree that mentioning the online setting in the paragraph before the result was misleading.}
	
	\begin{proof}
		Let $\mu$ and $\boldsymbol{p}^{\it half}$ be as stated. Denote  $\boldsymbol{p}^* \in \argmin_{\boldsymbol{p}}\SI(\allocation)$ such that $(\mu,\boldsymbol{p}^*)$ is individually rational (\Cref{thm:opt_gap_and_subset_inst}). 

		For $i \in B$, notice that $u_i(\mu,\boldsymbol{p}^{\it half}) = \frac{1}{2}\cdot a_{i,\mu_i}$ and, since $(\mu,\boldsymbol{p}^*)$ is individually rational,  $u_i(\mu,\boldsymbol{p}^*) \in [0, a_{i, \mu_i}]$. Thus, $u_i(\mu,\boldsymbol{p}^{\it half}) \geq \frac{1}{2} u_i(\mu,\boldsymbol{p}^*)$. Similarly, for $j \in S$, it also holds that $v_j(\mu,\boldsymbol{p}^{\it half}) \geq \frac{1}{2}\cdot v_j(\mu,\boldsymbol{p}^*)$. Given $(B', S') \subseteq (B,S)$ and $\mu'\in\mathcal{M}(B',S')$, it follows,
		
		% \EM{\normalsize Previous Version :
			% \begin{align*}
				% &\SW(\mu') -  \SW|_{B',S'}(\mu,\boldsymbol{p}^{\it half}) =  \SW(\mu') - \sum_{i\in B'} u_i(\mu,\boldsymbol{p}^{\it half})\\
				% &- \sum_{j\in S'} v_j(\mu,\boldsymbol{p}^{\it half}) \leq \SW(\mu') - \frac{1}{2}\cdot \sum_{i\in B'}u_i(\mu,\boldsymbol{p}^*) - \frac{1}{2} \cdot \sum_{j\in S'}v_j(\mu,\boldsymbol{p}^*)\\
				% &= \SW(\mu') - \frac{1}{2}\cdot \SW|_{B',S'}(\mu,\boldsymbol{p}^{*}) \\
				% &= \SW(\mu') - \SW|_{B',S'}(\mu,\boldsymbol{p}^{*}) + \frac{1}{2}\cdot \SW|_{B',S'}(\mu,\boldsymbol{p}^{*})\\
				% &\leq \SI(\mu,\boldsymbol{p}^*) + \frac{1}{2}\cdot \SW|_{B',S'}(\mu,\boldsymbol{p}^{*})\\
				% &\leq \SI(\mu,\boldsymbol{p}^*) + \frac{1}{2}\cdot \SW(\mu,\cancel{\boldsymbol{p}^{*}})\\
				% &= \opt - \SW(\mu,\cancel{\boldsymbol{p}^{*}}) + \frac{1}{2}\cdot \SW(\mu,\cancel{\boldsymbol{p}^{*}})\\
				% &= \opt - \frac{1}{2}\cdot \SW(\mu,\cancel{\boldsymbol{p}^{*}}),
				% \end{align*}
			% where $\SI(\mu,\cancel{\boldsymbol{p}^*}) = \opt - \SW(\mu,\boldsymbol{p}^{*})$ comes from the fact that $\boldsymbol{p}^* \in \argmin_{\boldsymbol{p}}\SI(\allocation)$ and \Cref{thm:opt_gap_and_subset_inst}. We conclude by taking the maximum over all sub-markets and matchings.}
		% \FG{Do not use the bars to show that stuff cancel out}
		% \EM{The bar was to show the difference between the previous version and the actual. I remove the $\boldsymbol{p}^*$ because I found it more clear, and so that you don't need to check the whole proof to see the difference, I highlighted them by \textbackslash cancel}
		\begin{align*}
			\SW(\mu') -  \SW|_{B',S'}(\mu,\boldsymbol{p}^{\it half}) &=  \SW(\mu') - \sum_{i\in B'} u_i(\mu,\boldsymbol{p}^{\it half})- \sum_{j\in S'} v_j(\mu,\boldsymbol{p}^{\it half})\\
			&\leq \SW(\mu') - \frac{1}{2}\cdot \sum_{i\in B'}u_i(\mu,\boldsymbol{p}^*) - \frac{1}{2} \cdot \sum_{j\in S'}v_j(\mu,\boldsymbol{p}^*)\\
			&= \SW(\mu') - \frac{1}{2}\cdot \SW|_{B',S'}(\mu,\boldsymbol{p}^{*}) \\
			&= \SW(\mu') - \SW|_{B',S'}(\mu,\boldsymbol{p}^{*}) + \frac{1}{2}\cdot \SW|_{B',S'}(\mu,\boldsymbol{p}^{*})\\
			&\leq \SI(\mu,\boldsymbol{p}^*) + \frac{1}{2}\cdot \SW|_{B',S'}(\mu,\boldsymbol{p}^{*})\\
			% &\leq \SI(\mu,\boldsymbol{p}^*) + \frac{1}{2}\cdot \SW(\mu)\\
			&\leq \opt - \SW(\mu) + \frac{1}{2}\cdot \SW(\mu) \\
			&= \opt - \frac{1}{2}\cdot \SW(\mu),
		\end{align*}
		where the last inequality uses the fact that $\SI(\mu,\boldsymbol{p}^*) = \opt - \SW(\mu)$. We conclude by taking the maximum over all sub-markets and matchings, and normalizing by $\opt$.
	\end{proof}
	
	Observe that prices in Proposition \ref{prop:half_prices_is_half_stable} only depend on the generated utility of the seller and buyer, making it computable online settings where agents or edges arrive sequentially to the market.
	%Proposition \ref{prop:half_prices_is_half_stable} gives an easy way to define prices without suffering an important loss on stability. Interestingly, as prices only depend on the identity of the sellers and their partners, the vector $\boldsymbol{p}^{\it half}$ can be implemented in an online manner as agents arrive sequentially to the market. We will leverage this observation in \Cref{sec:online_stable_matching}.
	%\UG{I shortened a lot the above paragraph, as it was repeating what we said before the proposition}
	
	\subsection{Approximated Core}
	
	The stability index is not scale-invariant with respect to the agents' utility. For example, consider the market in \Cref{fig:housing_market_example} and change $h_{\text{Alice},\text{Dori}}$ to $10^{10}$. Matching only Alice and Dori, with a price equal to $6$€, achieves a stability index close to $1$, while leaving unmatched the rest of the agents. To avoid this, we consider an alternative stability notion, known as the approximate core \cite{faigle1998approximate,qiu2016approximate, vazirani2022general}.
	
	% \UG{Can an example of the scalability problem be provided, or a reference to it?}
	
	\begin{definition}
		Given an assignment game $\Gamma$, its corresponding matrix of generated utility $\boldsymbol{a}$, and $\kappa\in [0,1]$, we define the \textbf{$\kappa$-approximate core} as
		\begin{align*}
			C_\kappa(\Gamma) := \bigl\{(u,v) \in \mathbb{R}_+^{B}\times\mathbb{R}_+^{S} \mid u_i + v_j \geq \kappa\cdot a_{i,j}, \forall (i,j)\in B \times S\bigr\}.
		\end{align*}
		We denote $(\allocation) \in C_\kappa(\Gamma)$ whenever $(u(\allocation),v(\allocation))$ belongs to $C_\kappa(\Gamma)$, and say that $(\allocation)$ is in the $\kappa$-approximate core.
	\end{definition}
	
	Matching only Alice and Dori in the modified market with $h_{\text{Alice},\text{Dori}} = 10^{10}$, even though it achieves a stability index close to $1$, it belongs to the $0$-approximate core. With this in mind, as a first result, we prove that belonging to the $\kappa$-approximate core is indeed stronger than achieving a stability index of $\kappa$.
	
	\begin{proposition}\label{prop:connection_approx_core_and_subset_inst}
		Let $(\allocation)$ be an allocation in the $\kappa$-approximate core. Then, $\kappa\leq\normSI(\allocation)$.
	\end{proposition}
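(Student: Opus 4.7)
The plan is to bound the subset instability $\SI(\mu, \mathbf{p})$ directly from above by $(1 - \kappa) \cdot \opt$, which immediately yields $\normSI(\mu, \mathbf{p}) = 1 - \SI(\mu, \mathbf{p})/\opt \geq \kappa$. To do this, I will unfold the definition of $\SI$ as a maximum over sub-markets $(B', S')$ and matchings $\mu' \in \mathcal{M}(B', S')$, and estimate each term $\SW(\mu') - \SW|_{B', S'}(\mu, \mathbf{p})$ uniformly.

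The key observation is that the assumption $(\mu, \mathbf{p}) \in C_\kappa(\Gamma)$ gives two things: non-negativity of each $u_i(\mu, \mathbf{p})$ and $v_j(\mu, \mathbf{p})$, and the approximate blocking-pair inequality $u_i(\mu, \mathbf{p}) + v_j(\mu, \mathbf{p}) \geq \kappa \cdot a_{i,j}$ for every $(i,j) \in B \times S$. Using non-negativity, I drop all contributions from agents of $B' \cup S'$ that $\mu'$ leaves unmatched to obtain
\begin{align*}
\SW|_{B',S'}(\mu, \mathbf{p}) \;\geq\; \sum_{(i,j) \in \mu'} \bigl( u_i(\mu, \mathbf{p}) + v_j(\mu, \mathbf{p}) \bigr),
\end{align*}
and then applying the approximate-core inequality pair by pair yields $\SW|_{B',S'}(\mu, \mathbf{p}) \geq \kappa \sum_{(i,j)\in\mu'} a_{i,j} = \kappa \cdot \SW(\mu')$.

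Consequently, for every sub-market and matching $\mu'$,
\begin{align*}
\SW(\mu') - \SW|_{B',S'}(\mu, \mathbf{p}) \;\leq\; (1-\kappa) \cdot \SW(\mu') \;\leq\; (1-\kappa) \cdot \opt.
\end{align*}
Taking the max over $(B', S')$ and $\mu'$ gives $\SI(\mu, \mathbf{p}) \leq (1-\kappa) \cdot \opt$, and dividing by $\opt$ delivers $\normSI(\mu, \mathbf{p}) \geq \kappa$. There is no real obstacle: the only subtlety is being careful that the sum defining $\SW|_{B',S'}(\mu, \mathbf{p})$ ranges over all agents in $B' \cup S'$ (not only those matched by $\mu'$), so that non-negativity of utilities is indeed what lets me restrict to matched pairs before invoking the approximate-core bound.
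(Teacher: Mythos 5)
Your proof is correct and follows essentially the same route as the paper: decompose $\SW|_{B',S'}(\allocation)$, discard the agents left unmatched by $\mu'$ using non-negativity of utilities, apply the $\kappa$-approximate-core inequality pair by pair to get $\SW(\mu') - \SW|_{B',S'}(\allocation) \leq (1-\kappa)\SW(\mu') \leq (1-\kappa)\opt$, and take the maximum. No differences worth noting.
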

	
	\begin{proof}
		Let $(\allocation)$ be a $\kappa$-approximate core allocation, $(B', S') \subseteq (B,S)$ a sub-market, and $\mu' \in \mathcal{M}(B',S')$ a matching. It follows,
		\begin{align*}
			\SW(\mu') -  \SW|_{B',S'}(\allocation) &= \sum_{(i,j)\in\mu'} a_{i,j} - \sum_{i \in B'}u_i(\allocation) - \sum_{j\in S'}v_j(\allocation)\\
			&= \sum_{(i,j)\in\mu'} a_{i,j} - u_i(\allocation) - v_j(\allocation) - \sum_{(i,i)\in \mu'} u_i(\allocation) -\sum_{(j,j) \in \mu'}  v_j(\allocation)\\
			&\leq \sum_{(i,j)\in\mu'} \bigl(a_{i,j} - u_i(\allocation) - v_j(\allocation)\bigr)  \\
			&\leq \sum_{(i,j)\in\mu'} a_{i,j} - \kappa \cdot a_{i,j} = (1-\kappa) \SW(\mu')\\
			&\leq (1-\kappa)\opt,
		\end{align*}
		where the first inequality comes from individual rationality (utilities are non-negative in the $\kappa$-approximate core) and the second one from $(\allocation)$ being in the $\kappa$-approximate core. We conclude by taking maximum over all sub-markets and all matchings.
	\end{proof}
	
	%From Proposition \ref{prop:connection_approx_core_and_subset_inst}, it follows that whenever an allocation $(\allocation)$ is in the $\kappa$-approximate core, then,
	%\begin{align}\label{eq:inequality_kappa_si_lambda}
	%    \kappa \leq \normSI(\allocation)\leq \lambda(\mu).
	%\end{align}
	%\UG{I propose to delete the paragraph above, since it is not as linked with the one that follows, and the important equation is not this one but the one at the end of the section. Ok?}
	%\FG{Ok}
	
	The $\kappa$-approximate core is a \textit{local} stability notion, as it evaluates the social welfare of each pair relative to their generated utility. In contrast, subset instability is a \textit{global} stability notion, since it considers the aggregated social welfare across all pairs. Intuitively, if an allocation is locally close to being stable everywhere, then it must also be globally stable (Proposition \ref{prop:connection_approx_core_and_subset_inst}), whereas the converse does not necessarily hold, as we show next.
	
	\begin{proposition}\label{prop:counterexample_0_kappa_stable}
		For any $\kappa \in [0,1)$, there exists an assignment game with an allocation $(\allocation)$ verifying $\kappa \leq \normSI(\allocation)\leq \lambda(\mu)$, such that for no constant $\kappa' \in (0, \kappa]$, $(\allocation)$ is in the $\kappa'$-approximate core.
	\end{proposition}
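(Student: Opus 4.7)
The plan is to construct a small explicit assignment game (two buyers, two sellers) where almost all of the generated utility is concentrated in a single dominant pair, while a second pair generates only a small but strictly positive amount of utility. We then output the matching that contains only the dominant pair and leaves the small pair completely unmatched. By choosing the ratio between the two utilities appropriately, we can make the optimality gap, and hence the stability index gap, as small as desired, while the unmatched small pair still provides a ``witness'' pair on which the approximate-core inequality fails catastrophically (its left-hand side is exactly $0$).

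More concretely, I would fix a large $M>0$ and a tuning parameter $\epsilon>0$ (to be chosen as a function of $\kappa$), and set $B=\{\ia,\ib\}$, $S=\{\ja,\jb\}$, $c_{\ja}=c_{\jb}=0$, $h_{\ia,\ja}=M$, $h_{\ib,\jb}=\epsilon$, $h_{\ia,\jb}=h_{\ib,\ja}=0$, so that $a_{\ia,\ja}=M$, $a_{\ib,\jb}=\epsilon$, and the other generated utilities vanish. The allocation $(\mu,\boldsymbol{p})$ has $\mu=\{(\ia,\ja)\}$ with $\ib,\jb$ unmatched, and prices $p_{\ja}=0$, $p_{\jb}=c_{\jb}=0$. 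This gives $u_{\ia}=M$ and $u_{\ib}=v_{\ja}=v_{\jb}=0$.

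For the stability index, the only non-trivial sub-markets to inspect are $(\{\ib\},\{\jb\})$ and the whole market with $\mu'=\{(\ia,\ja),(\ib,\jb)\}$; in both cases the gap $\SW(\mu')-\SW|_{B',S'}(\mu,\boldsymbol{p})$ equals $\epsilon$. Since $\opt=M+\epsilon$, this yields $\normSI(\mu,\boldsymbol{p})=\tfrac{M}{M+\epsilon}=\lambda(\mu)$, and choosing $\epsilon \leq M\cdot\tfrac{1-\kappa}{\kappa}$ (with any positive choice if $\kappa=0$) immediately gives $\normSI(\mu,\boldsymbol{p})\geq\kappa$, with $\normSI(\mu,\boldsymbol{p})\leq\lambda(\mu)$ guaranteed by \Cref{thm:opt_gap_and_subset_inst}. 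On the other hand, for the pair $(\ib,\jb)$ we have $u_{\ib}+v_{\jb}=0<\kappa'\cdot a_{\ib,\jb}=\kappa'\epsilon$ for every $\kappa'>0$, so the allocation lies outside the $\kappa'$-approximate core for every $\kappa'\in(0,\kappa]$ (in fact for every $\kappa'>0$).

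There is no real obstacle here beyond packaging the example cleanly: the conceptual point to highlight is exactly the failure mode that motivates the approximate core in the previous subsection, namely that the stability index, being a global and additively-normalized quantity, can be made arbitrarily close to $1$ while a single, utility-small pair is entirely ignored by the allocation. The construction makes this quantitative by letting $\epsilon/M\to 0$ to push $\normSI$ up, while keeping $\epsilon>0$ fixed so that the witness pair permanently destroys every positive approximate-core guarantee.
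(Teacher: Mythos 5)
Your proof is correct and takes essentially the same approach as the paper: a $2\times 2$ instance with two disjoint pairs, where only the dominant pair is matched and the unmatched low-utility pair witnesses the failure of every positive $\kappa'$-approximate core. The paper simply normalizes the two generated utilities to $\kappa$ and $1-\kappa$ so that $\normSI=\lambda=\kappa$ exactly, whereas you reach the same conclusion via the parameters $M$ and $\epsilon$.
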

	\begin{proof}
		Let $\kappa\in [0,1)$ be a constant. Consider an assignment game with two buyers $B = \{a,b\}$, two sellers $S = \{\alpha,\beta\}$, and the following matrix of generated utility $\boldsymbol{a}$
		\begin{align*}
			\boldsymbol{a} = 
			\begin{array}{c|cc}
				& \alpha & \beta \\
				\hline
				a &  \kappa  & 0  \\
				b & 0   & 1-\kappa
			\end{array}
		\end{align*}
		Consider $(\allocation)$ defined by $\mu = \{(a,\alpha), (b,b),(\beta,\beta)\}$ and $\boldsymbol{p} = (0,0)$, that is, only $a$ and $\alpha$ are matched and $a$ pays $0$ to $\alpha$. The allocation verifies $\opt -  \SW(\allocation) \leq \SI(\allocation)\leq (1-\kappa)\opt$, however, $(\allocation)$ is not in the $\kappa'$-approximate stable for any $\kappa' > 0$.
	\end{proof}
	
	Interestingly, there exists a connection between the $\kappa$-approximate core and a multiplicative version of subset instability.
	
	\begin{theorem}\label{thm:allocations_are_always_kappa_core}
		Given $(\allocation)$ an individually rational allocation, define,
		% Then $(\allocation)$ belongs to the $\overline{\kappa}$-approximated core, for 
		\begin{align}\label{eq:alt_kappa}
			\kappa(\allocation) := \min\limits_{(i,j) \in B\times S} \frac{1}{a_{i,j}}\cdot(u_i(\allocation) + v_j(\allocation)).
		\end{align}
		Then, it always holds that $(\allocation)$ is in the $\kappa(\allocation)$-approximated core. In addition, 
		\begin{align}\label{eq:max_kappa}
			\kappa(\allocation) = \min\limits_{(B',S') \subseteq (B,S)} \min_{\mu'\in\mathcal{M}(B',S')} \dfrac{\SW|_{B',S'}(\mu)}{\SW(\mu')}.
		\end{align}
	\end{theorem}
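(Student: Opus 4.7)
The plan is to establish the two parts separately, both as direct consequences of individual rationality combined with the definition of $\kappa(\allocation)$.

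For the first claim, observe that $\kappa(\allocation)$ is, by its very definition as a minimum over $B \times S$, the largest constant for which $u_i(\allocation) + v_j(\allocation) \geq \kappa \cdot a_{i,j}$ holds for every pair $(i,j) \in B \times S$. Since individual rationality supplies $u_i(\allocation), v_j(\allocation) \geq 0$, the defining conditions of the $\kappa(\allocation)$-approximate core (cf.\ Definition of $C_\kappa(\Gamma)$) are all met, so $(\allocation) \in C_{\kappa(\allocation)}(\Gamma)$.

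For the equality~\eqref{eq:max_kappa}, I would prove the two inequalities separately. For ``$\leq$'', fix any sub-market $(B', S') \subseteq (B, S)$ and any matching $\mu' \in \mathcal{M}(B', S')$, and sum the inequality $u_i(\allocation) + v_j(\allocation) \geq \kappa(\allocation) \cdot a_{i,j}$ over the matched pairs $(i,j) \in \mu'$. The left-hand side collects each matched agent's utility exactly once and is therefore bounded above by $\SW|_{B', S'}(\allocation) = \sum_{i \in B'} u_i(\allocation) + \sum_{j \in S'} v_j(\allocation)$, because the unmatched agents in $B' \cup S'$ contribute non-negatively by individual rationality; the right-hand side equals $\kappa(\allocation) \cdot \SW(\mu')$. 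Dividing gives $\kappa(\allocation) \leq \SW|_{B', S'}(\allocation)/\SW(\mu')$, and taking the minimum over all sub-markets and matchings completes the direction. For ``$\geq$'', I would exhibit a specific choice realizing the ratio $\kappa(\allocation)$: letting $(i^*, j^*) \in \argmin_{(i,j) \in B \times S} (u_i(\allocation) + v_j(\allocation))/a_{i,j}$ and taking $B' = \{i^*\}$, $S' = \{j^*\}$, $\mu' = \{(i^*, j^*)\}$, we get $\SW|_{B', S'}(\allocation) = u_{i^*}(\allocation) + v_{j^*}(\allocation)$ and $\SW(\mu') = a_{i^*, j^*}$, so the corresponding ratio is exactly $\kappa(\allocation)$.

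There is no real technical obstacle; the main subtlety is simply accounting correctly for unmatched agents when passing from the sum over the edges of $\mu'$ to the full $\SW|_{B', S'}(\allocation)$, which is precisely the place where individual rationality is required (without it, an unmatched agent with negative utility could break the key inequality). Degenerate pairs with $a_{i,j} = 0$ can be removed from the minimum defining $\kappa(\allocation)$ and from \eqref{eq:max_kappa}, as their constraints $u_i + v_j \geq \kappa \cdot 0$ are trivially satisfied for every $\kappa$.
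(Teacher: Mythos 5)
Your proposal is correct and follows essentially the same route as the paper: the first claim is read off directly from the definition of the $\kappa$-approximate core together with individual rationality, and the equality~\eqref{eq:max_kappa} is obtained by (i) summing the pairwise inequality over the edges of $\mu'$ and discarding the non-negative contribution of unmatched agents to get one direction, and (ii) instantiating the single-pair sub-market attaining the minimum to get the other. Your explicit remark on pairs with $a_{i,j}=0$ is a small additional care the paper leaves implicit, but it does not change the argument.
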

	
	\begin{proof}
		Let $(\allocation)$ be an individually rational allocation. Recall that $(\allocation)$ is in the $\kappa$-approximate core, for $\kappa$ some constant, if
		\begin{align*}
			\forall (i,j) \in B\times S, u_i(\allocation) + v_j(\allocation) \geq \kappa \cdot a_{i,j}
			\Longleftrightarrow &\ \forall (i,j) \in B\times S, \frac{u_i(\allocation) + v_j(\allocation)}{a_{i,j}} \geq \kappa \\
			\Longleftrightarrow &\ \min\limits_{(i,j) \in B\times S} \frac{1}{a_{i,j}}\cdot(u_i(\allocation) + v_j(\allocation)) \geq \kappa.
		\end{align*}
		Therefore, $(\allocation)$ always belongs to the $\kappa(\allocation)$-approximated core, for $\kappa(\allocation)$ as in \Cref{eq:alt_kappa}. We prove next that \Cref{eq:max_kappa} holds. Consider 
		$$R := \min\limits_{(B',S') \subseteq (B,S)} \min_{\mu'\in\mathcal{M}(B',S')} \dfrac{\SW|_{B',S'}(\mu)}{\SW(\mu')}.$$
		It directly follows that $\kappa(\allocation) \geq R$ as $R$ considers all sub-markets, in particular those with only one agent per side. Consider next $(B',S') \subseteq (B,S)$ and $\mu'$ a matching from $B'$ to $S'$. Consider, without loss of generality, that $|B'|=|S'|$ and all agents are matched at $\mu'$ (indeed, removing any unmatched agent from the coalition does not affect $\SW(\mu')$ and does not decrease $\SW|_{B',S'}(\mu)$, by individual rationality). It follows,
		\begin{align*}
			\frac{\SW|_{B',S'}(\mu)}{\SW(\mu')} &= \frac{\sum\limits_{(i,j)\in\mu'} u_i(\allocation) + v_j(\allocation)}{\sum\limits_{(i,j)\in \mu'} a_{i,j}} \geq \min_{(i,j)\in\mu'} \frac{u_i(\allocation) + v_j(\allocation)}{a_{i,j}} \geq \kappa(\allocation).
		\end{align*}
		$\kappa(\allocation)$ not depending on $(B',S')$ nor $\mu'$, we conclude $\kappa(\allocation)\leq R.$
		% for $x_i, y_i \in \mathbb R^+$ with $\frac{x_1}{y_1} \leq \frac{x_i}{y_i}$, i.e. $\frac{x_1 y_i}{y_1} \leq x_i$ for $i \in \{1, \dots, n\}$, $$ \dfrac{\sum\limits_{i = 1}^n x_i}{\sum\limits_{i = 1}^n y_i} \geq \dfrac{\sum\limits_{i = 1}^n \frac{x_1 y_i}{y_1}}{\sum\limits_{i = 1}^n y_i} = \dfrac{x_1}{y_1}$$
	\end{proof}
	
	\Cref{thm:allocations_are_always_kappa_core} shows that an allocation will be as unstable as its most unstable couple. In the proof of Proposition \ref{prop:counterexample_0_kappa_stable}, for example, the value of $\kappa(\allocation)$ of the constructed allocation is equal to $0$. 
	
	To conclude the section, putting together Proposition \ref{prop:connection_approx_core_and_subset_inst} and \Cref{thm:allocations_are_always_kappa_core}, we conclude that for any allocation $(\allocation)$, it holds
	\begin{align}\label{eq:three_metrics}
		\kappa(\allocation) \leq \mathcal{J}(\allocation) \leq \lambda(\mu).
	\end{align}
	
	\Cref{eq:three_metrics} is particularly significant when applied to online matching, as it suggests that algorithms focusing on obtaining good stability bounds will invariably obtain good optimality bounds.
	%is the main message of this section and will represent the base to measure the performance of classical online algorithms for stable matching in \Cref{sec:online_stable_matching}. 
	%%%%%%%%%%%%%%%%%%%%%%%%%%%%%%%%%%%%%%%%%%%%%%%%%%%%%%%%%%%%%%%%%%%%%%%%%%%%%%%%%%%%%%%%%%%%%%%%%%%%%%%%%%%%%%%%%%%%%%%%%%%%%%%%%%%%%%%%%%%%%%%%%%%%%%%%%%%%%%%%%%%%%%%%%%%%%%%%%%%%%%%%
	%%%%%%%%%%%%%%%%%%%%%%%%%%%%%%%%%%%%%%%%%%%%%%%%%%%%%%%%%%%%%%%%%%%%%%%%%%%%%%%%%%%%%%%%%%%%
	
	\section{Online Stable Allocations}\label{sec:online_stable_matching}
	
	This section considers randomized algorithms to find stable allocations in online assignment games. After adapting our stability metrics to uncertain settings, we obtain systematic bounds on optimality and stability in two well-known models of online matching.
	
	\subsection{Stability Under Uncertainty}
	
	We consider two standard online matching frameworks: the \textit{edge arrival} model (\Cref{fig:edge_arrival}) and the \textit{vertex arrival} model (\Cref{fig:vertex_arrival}). In the former, we start with a bipartite graph containing only vertices, with edges arriving one by one. Upon the arrival of an edge, the algorithm must irrevocably decide whether to accept it, specifying a price to be paid, or to reject it. In the latter, we start with a bipartite graph with vertices fixed on one side, while vertices on the other side (together with their incident edges) arrive sequentially. Upon the arrival of such a vertex (an agent), the algorithm must decide whether to match it to an available partner (possibly none) and, if matched, at what price. Whenever randomization is allowed in these decisions, we refer to the algorithm as randomized.
	%\vspace{-0.33cm}
	\begin{figure}[H]
		\begin{subfigure}{0.49\textwidth}
			\centering
			\includegraphics[scale = 0.33]{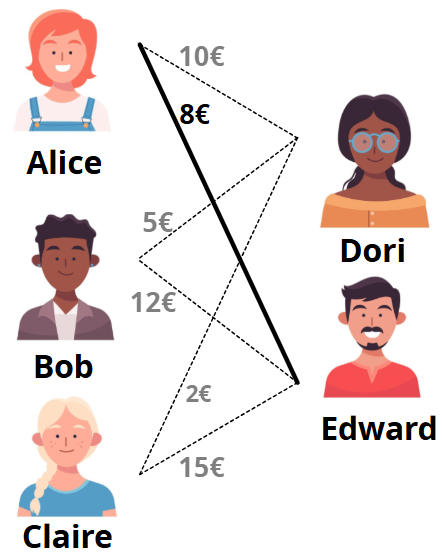}
			\caption{Edge arrival model}
			\label{fig:edge_arrival}    
		\end{subfigure}\hfill
		\begin{subfigure}{0.49\textwidth}
			\centering
			\includegraphics[scale = 0.33]{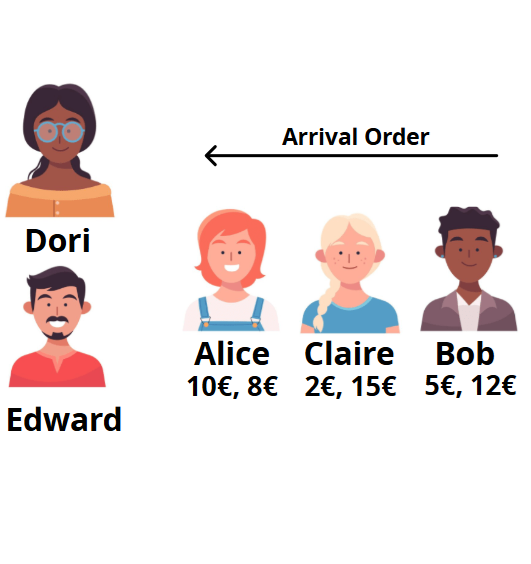}    
			\caption{Vertex arrival model}
			\label{fig:vertex_arrival}
		\end{subfigure}    
		\caption{Two online matching markets models.}
		\label{fig:online_matching_markets}
		% \Description{We continue with the instance from Figure \ref{fig:housing_market_example}. The information is revealed sequentially. In the edge arrival model, the edges are revealed one by one — for example, Alice values Edward’s house at 8 euros, and so on. In the vertex arrival model, the buyers are revealed sequentially: first Alice, with utilities of 10 euros and 8 euros; then Claire, with utilities of 2 euros and 15 euros; and finally Bob, with utilities of 5 euros and 12 euros.}
	\end{figure}
	%\vspace{-0.3cm}
	Given an online assignment game instance $\Gamma = (B,S, \boldsymbol{h},\boldsymbol{c})$ (either edge or vertex arrival)\footnote{For simplicity, we keep $\Gamma$ to denote online assignment games.} and a randomized algorithm $\ALG$, we denote $\ALG(\Gamma)$ the probability distribution of outcomes generated by $\ALG$, $supp(\ALG(\Gamma))$ its support, that is, the set of possible outcomes of the algorithm on $\Gamma$, and $(u(\ALG, \Gamma),v(\ALG,\Gamma))$ the vectors of expected utilities of the agents, that is, for any $(i,j)\in B\times S$,
	\begin{align*}
		&u_i(\ALG,\Gamma) = \mathbb{E}_{(\allocation)\sim \ALG(\Gamma)}[u_i(\allocation)] \text{ and } v_j(\ALG,\Gamma) = \mathbb{E}_{(\allocation)\sim \ALG(\Gamma)}[v_j(\allocation)],
	\end{align*}
	where $(\allocation)\sim \ALG(\Gamma)$ indicates that allocations are sampled from the distribution induced by ALG in $\Gamma$.  Given a metric $m \in \{\lambda,\normSI,\kappa\}$, we define the corresponding ex-post and ex-ante metric as follows,
	\begin{align*}
		&m^{\text{post}}(\ALG,\Gamma) := \min_{(\allocation)\in supp(\ALG(\Gamma))} m(\allocation) \text{ and }
		m^{\text{ante}}(\ALG,\Gamma) := \mathbb{E}_{(\allocation)\sim \ALG(\Gamma)}[m(\allocation)].
	\end{align*}
	Additionally, we define the average metric $m^{\text{avg}}(\ALG,\Gamma)$ by applying the definition of a metric $m$ on the expected utilities of the agents, that is, on $(u(\ALG,\Gamma),v(\ALG,\Gamma))$.
	
	\begin{remark}
		In both the online matching literature and the fair division literature, the ex-ante and average guarantees are usually treated as equivalent (see, e.g., \cite{freeman2020best,immorlica2023online}). Indeed, whenever a metric is linear, its ex-ante and average version coincide. 
		In particular, it always holds that $\lambda^{\text{ante}}(\ALG,\Gamma) = \lambda^{\text{avg}}(\ALG,\Gamma)$. However, the non-linearity of our stability metrics $\normSI$ and $\kappa$ breaks this equivalence, motivating the two different definitions in the randomized case (ex-ante and average), and showing that achieving stability in the online assignment game is a more subtle problem than achieving optimality.
	\end{remark}
	
	We now show a useful result that allows us to systematize stability and optimality guarantees of randomized online algorithms:
	
	\begin{proposition}\label{prop:post_bounds_ante_bounds_worst}
		Let $\Gamma$ be an instance, $\ALG$ a randomized algorithm, and $m \in \{\lambda,\normSI,\kappa\}$ a metric. It always holds,
		\begin{align*}
			m^{\text{post}}(\ALG,\Gamma) \leq m^{\text{ante}}(\ALG,\Gamma) \leq m^{\text{avg}}(\ALG,\Gamma).    
		\end{align*}
		Additionally, for any $\gamma\in \{\text{post},\text{ante},\text{avg}\}$, it always holds,
		\begin{align*}
			\kappa^{\gamma}(\ALG,\Gamma) \leq \normSI^{\gamma}(\ALG,\Gamma) \leq \lambda^{\gamma}(\ALG,\Gamma).
		\end{align*}
	\end{proposition}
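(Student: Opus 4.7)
The plan is to split the claim into its two chains and handle each separately. The first chain $m^{\text{post}} \leq m^{\text{ante}} \leq m^{\text{avg}}$ compares three versions of a single metric and reduces to elementary facts about minima and to Jensen's inequality, once one pins down the convexity/concavity of each metric as a function of the utility profile $(u,v)$. The second chain $\kappa^{\gamma} \leq \normSI^{\gamma} \leq \lambda^{\gamma}$ lifts the pointwise relation of Equation~\eqref{eq:three_metrics} to each of the three randomized variants.

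For the first chain, $m^{\text{post}} \leq m^{\text{ante}}$ is immediate: the minimum of a random variable over its support is never larger than its expectation. The inequality $m^{\text{ante}} \leq m^{\text{avg}}$ I handle metric by metric. For $\lambda$, the social welfare is linear in $(u,v)$, so the two sides in fact coincide. For $\normSI$, I would note that $\SI(\mu,\boldsymbol{p}) = \max_{(B',S'),\,\mu'}\{\SW(\mu') - \sum_{i\in B'} u_i - \sum_{j\in S'} v_j\}$ is a maximum of functions affine in $(u,v)$, hence convex; Jensen's inequality then yields $\SI^{\text{ante}}(\ALG,\Gamma) \geq \SI^{\text{avg}}(\ALG,\Gamma)$, which flips sign after passing to $\normSI = 1 - \SI/\opt$. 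For $\kappa$, I would invoke the explicit formula $\kappa(\allocation) = \min_{(i,j)}(u_i + v_j)/a_{i,j}$ from \Cref{thm:allocations_are_always_kappa_core}, which exhibits $\kappa$ as a minimum of linear functions and hence as a concave function of $(u,v)$; Jensen's inequality then gives $\kappa^{\text{ante}}(\ALG,\Gamma) \leq \kappa^{\text{avg}}(\ALG,\Gamma)$ directly.

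For the second chain, the cases $\gamma = \text{post}$ and $\gamma = \text{ante}$ follow from Equation~\eqref{eq:three_metrics} applied pointwise to each $(\allocation) \in supp(\ALG(\Gamma))$, combined with the monotonicity of $\min$ and the linearity of $\Expect$. The case $\gamma = \text{avg}$ is more delicate, since the inequalities must be re-established on the expected utility profile $(u(\ALG,\Gamma), v(\ALG,\Gamma))$, which in general does not correspond to the utility profile of any single allocation. Here I would rerun the proof of \Cref{prop:connection_approx_core_and_subset_inst} (for $\kappa^{\text{avg}} \leq \normSI^{\text{avg}}$) and the bound $\normSI \leq \lambda$, which obtains by plugging $B' = B$, $S' = S$ and an optimal matching into the definition of $\SI$; both arguments are purely algebraic in $(u,v)$ and require only non-negativity of its components, a property inherited from the individual rationality of every realization.

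The main subtlety lies in the $m^{\text{ante}} \leq m^{\text{avg}}$ step, where one must be consistent about the direction in which Jensen flips: the raw quantity $\SI$ is convex while its normalization $\normSI$ is concave, and $\kappa$ is directly concave. Keeping these signs aligned with the distinction between averaging the metric and evaluating the metric at the averaged utilities is where the proof demands care; everything else reduces to a direct application of Equation~\eqref{eq:three_metrics} or a straightforward monotonicity argument.
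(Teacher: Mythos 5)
Your proposal is correct and follows essentially the same route as the paper's proof: the post-vs-ante comparison via minimum-versus-expectation, the ante-vs-avg comparison via linearity for $\lambda$ and Jensen's inequality applied to $\SI$ as a maximum of affine functions and to $\kappa$ as a minimum of linear functions (with the sign flip through $\normSI = 1 - \SI/\opt$ handled correctly), and the second chain obtained by applying \Cref{eq:three_metrics} pointwise for post/ante and rerunning the algebraic arguments on the expected utility profile for avg. No gaps.
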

	
	The proof of the first part of Proposition \ref{prop:post_bounds_ante_bounds_worst} uses the linearity of the expected value and Jensen's inequality. The second part adapts the arguments on the proof of Proposition \ref{prop:connection_approx_core_and_subset_inst}. The formal proof is included in Appendix \ref{sec:missing proofs}. 
	
	%\UG{how about adding the notions of competitive ratios here?}
	%\FG{Do we really need it ? Maybe we could just not to use the word competitive at all}
	%\UG{Ok I like the idea, it simplifies our life (but in a future longer paper we will have to do it). At least now I understood the definitions :) I left it as a short paragraph in section 4.3}
	
	%%%%%
	\subsection{Equal Pricing and Edge Arrival Model}
	
	%We conclude this section 
	We begin by stating a generalization of Proposition \ref{prop:half_prices_is_half_stable} and by strengthening it to a tightness result that shows the impossibility of obtaining good stability guarantees under the edge arrival model.
	
	\begin{proposition}\label{prop:randomized_half_prices}
		Let $\Gamma$ be an instance and $\ALG$ a randomized matching algorithm. Denote $\ALG + \textit{Half}$ the randomized algorithm that matches agents following $\ALG$ and, for any pair of matched agents $(i,j) \in B\times S$, the price $p_j$ is set as in Proposition \ref{prop:half_prices_is_half_stable}. For any $\gamma\in \{\text{post},\text{ante},\text{avg}\}$, it holds,
		$$\frac{1}{2}\cdot\lambda^{\gamma}(\ALG + \textit{Half},\Gamma) \leq \mathcal{J}^{\gamma}(\ALG +  \textit{Half},\Gamma).$$
	\end{proposition}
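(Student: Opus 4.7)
The plan is to reduce each case $\gamma \in \{\text{post},\text{ante},\text{avg}\}$ to the deterministic inequality $\frac{1}{2}\lambda(\mu) \leq \normSI(\mu,\boldsymbol{p})$ of \Cref{prop:half_prices_is_half_stable}, combined with the systematic ordering of \Cref{prop:post_bounds_ante_bounds_worst}. Since \textit{Half} outputs only allocations whose prices match those prescribed in \Cref{prop:half_prices_is_half_stable}, the pointwise inequality $\frac{1}{2}\lambda(\mu) \leq \normSI(\mu,\boldsymbol{p}^{\it half})$ holds for every realization in $\mathrm{supp}(\textit{Half}(\Gamma))$.

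For the ex-post case, I would take the minimum of both sides of the pointwise inequality over the support; the minimum preserves the inequality and commutes with the scalar $\nicefrac{1}{2}$, giving $\frac{1}{2}\lambda^{\text{post}}(\textit{Half},\Gamma) \leq \normSI^{\text{post}}(\textit{Half},\Gamma)$. For the ex-ante case, I would take expectations of both sides instead, using linearity of expectation to factor out the $\nicefrac{1}{2}$, obtaining $\frac{1}{2}\lambda^{\text{ante}}(\textit{Half},\Gamma) \leq \normSI^{\text{ante}}(\textit{Half},\Gamma)$.

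For the average case, the non-linearity of $\normSI$ prevents a purely pointwise argument from transferring directly to the metric computed on expected utilities. The key observation is that social welfare is linear in the agents' utilities, so $\lambda^{\text{ante}}(\textit{Half},\Gamma) = \lambda^{\text{avg}}(\textit{Half},\Gamma)$, as already noted in the remark following the definition of the three metrics. Combining this identity with the ex-ante bound and with the ordering $\normSI^{\text{ante}}(\textit{Half},\Gamma) \leq \normSI^{\text{avg}}(\textit{Half},\Gamma)$ supplied by \Cref{prop:post_bounds_ante_bounds_worst} immediately yields the desired inequality $\frac{1}{2}\lambda^{\text{avg}}(\textit{Half},\Gamma) \leq \normSI^{\text{avg}}(\textit{Half},\Gamma)$.

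The average case is the main subtlety, and it is conceptual rather than computational. The underlying mechanism is that $\SI$ is a maximum of affine functionals of the utility vector, hence convex; evaluating $\SI$ on expected utilities therefore cannot exceed the expectation of $\SI$ evaluated along the realizations, which after dividing by $\opt$ and subtracting from one is exactly the inequality $\normSI^{\text{ante}} \leq \normSI^{\text{avg}}$ of \Cref{prop:post_bounds_ante_bounds_worst}. This convexity-based reversal is what allows the pointwise $\nicefrac{1}{2}$-guarantee to propagate from individual realizations all the way to the metric computed on expected utilities, closing the third case.
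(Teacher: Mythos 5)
Your proof is correct. The ex-post and ex-ante cases proceed exactly as in the paper: the pointwise inequality $\frac{1}{2}\lambda(\mu)\leq\normSI(\mu,\boldsymbol{p}^{\it half})$ holds on every realization because \textit{Half} only ever outputs prices of the form prescribed in \Cref{prop:half_prices_is_half_stable}, and taking the minimum (resp.\ expectation) over realizations transfers it. Where you genuinely diverge is the average case. The paper re-opens the proof of \Cref{prop:half_prices_is_half_stable}, extracts the intermediate bound $\SW(\mu')-\SW|_{B',S'}(\mu,\boldsymbol{p}^{\it half})\leq\opt-\frac{1}{2}\SW(\mu)$ for each realization and each coalition, takes expectations, and only then maximizes over coalitions, thereby bounding the subset instability of the expected utilities directly. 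You instead treat \Cref{prop:post_bounds_ante_bounds_worst} as a black box and chain $\frac{1}{2}\lambda^{\text{avg}}=\frac{1}{2}\lambda^{\text{ante}}\leq\normSI^{\text{ante}}\leq\normSI^{\text{avg}}$, where the first equality is the linearity of social welfare and the last inequality is the convexity of $\SI$ (a maximum of affine functionals of the utility vector) already established in that proposition. Both arguments are sound and yield the same constant; yours is shorter and makes the logical dependence on \Cref{prop:post_bounds_ante_bounds_worst} explicit, whereas the paper's version is self-contained and would survive even if one only had the coalition-wise bound rather than the full ordering of the three guarantee levels.
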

	
	The proof of Proposition \ref{prop:randomized_half_prices} follows similar arguments than Proposition \ref{prop:half_prices_is_half_stable} and can be found in Appendix \ref{sec:missing proofs}.
	
	The pricing algorithm \textit{Half} considered in Proposition \ref{prop:randomized_half_prices} may incorporate any randomized matching procedure and is applicable to both the edge and vertex arrival models. Interestingly, in the edge arrival setting we can construct two simple instances such that no algorithm from a broad family of randomized algorithms can achieve better than the $\nicefrac{1}{2}$-factor on both instances simultaneously.
	
	\begin{proposition}\label{prop:tightness_half_prices}
		Consider the two edge arrival instances $\Gamma_1$ and $\Gamma_2$ illustrated in Figure \ref{fig:example_tightness}, where the first edge to arrive in each of them is between Alice and Dori, and all generated utilities are equal to $1$.
		\begin{figure}[H]
			\centering
			\includegraphics[scale = 0.225]{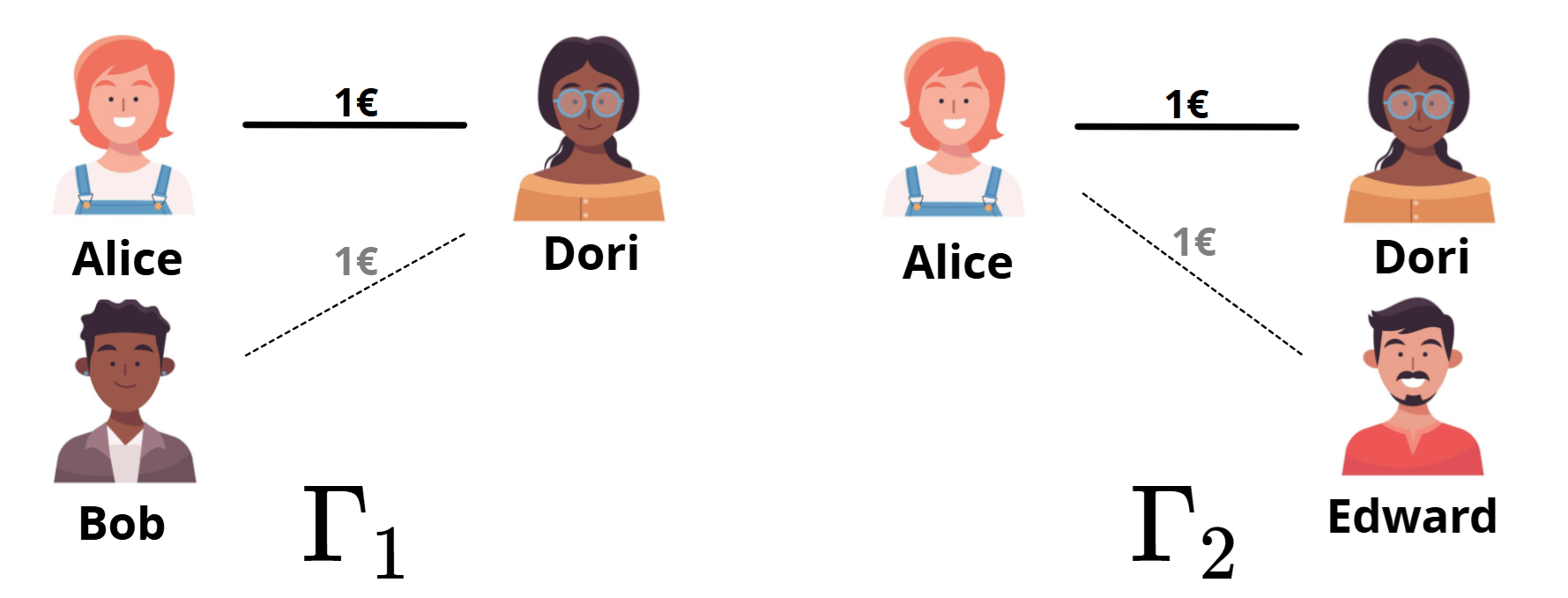}
			\caption{Two edge arrival instances}
			\label{fig:example_tightness}
			% \Description{In instance Gamma 1, the utility generated by matching Alice and Dori is 1 euro and is revealed first. We then reveal that matching Bob and Dori also generates a utility of 1 euro. In instance Gamma 2, the same process occurs, but instead of Bob and Dori generating 1 euro, it is Alice and Edward.}
		\end{figure}
		Let $\mathcal{A}$ be the family of all randomized algorithms such that Alice and Dori are matched with probability $1$. Then, for any $\ALG \in \mathcal{A}$ and any $\gamma\in \{\text{post},\text{ante},\text{avg}\}$, it holds,
		$$\frac{1}{2}\cdot\lambda^{\gamma}(\ALG,\Gamma_1) \geq \mathcal{J}^{\gamma}(\ALG,\Gamma_1) \text{ or } \frac{1}{2}\cdot\lambda^{\gamma}(\ALG,\Gamma_2) \geq \mathcal{J}^{\gamma}(\ALG,\Gamma_2).$$
	\end{proposition}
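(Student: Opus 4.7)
The whole proof rests on a coupling observation: because in the edge arrival model $\ALG$ must commit to the price $p$ it charges Alice for Dori before seeing any subsequent edge, and because both instances share Alice-Dori as the first arriving edge, the distribution of $p$ induced by $\ALG$ is the \emph{same} in $\Gamma_1$ and $\Gamma_2$. I write $P$ for this common distribution on $[0,1]$; prices outside $[c_{\text{Dori}},h_{\text{Alice,Dori}}]=[0,1]$ only make $\normSI$ smaller (they generate extra singleton subset instability through $u_A<0$ or $v_D<0$), so WLOG $P$ is supported on $[0,1]$. This reduces the whole statement to a one-variable dichotomy $p\le 1/2$ vs.\ $p\ge 1/2$.

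\textbf{Per-realisation computation.} Reading $\Gamma_1$ as pairing Alice-Dori with an extra buyer (say Bob) connected to Dori by a second edge of generated utility $1$, and $\Gamma_2$ symmetrically as pairing Alice-Dori with an extra seller (Edward) connected to Alice, both instances satisfy $\opt=1$ while the matching produced by $\ALG$ achieves $\SW=1$; hence $\lambda=1$ deterministically in both. For a realisation with price $p$ in $\Gamma_1$, the maximiser of $\SI$ is the sub-market $\{\text{Bob},\text{Dori}\}$: they can reach $1$ together but currently obtain $u_{\text{Bob}}+v_{\text{Dori}}=0+p$, so $\SI=1-p$ and $\normSI(\allocation)=p$. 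Symmetrically, $\normSI(\allocation)=1-p$ in $\Gamma_2$. A short case check, showing that sub-markets containing the matched pair Alice-Dori contribute $\SI=0$ because $u_A+v_D=1$ already saturates $\opt$, and that singletons contribute $\SI=0$ by individual rationality, confirms these are the true maxima.

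\textbf{Three regimes.} Since $\lambda^{\gamma}=1$ in both instances for every $\gamma$, the desired inequality reduces to $\normSI^{\gamma}(\ALG,\Gamma_1)\le\tfrac12$ or $\normSI^{\gamma}(\ALG,\Gamma_2)\le\tfrac12$:
\begin{itemize}
\item[$\bullet$] \emph{Ex-post:} $\normSI^{\text{post}}(\ALG,\Gamma_1)=\inf\mathrm{supp}(P)$ and $\normSI^{\text{post}}(\ALG,\Gamma_2)=1-\sup\mathrm{supp}(P)$; since $\inf\le\sup$, at least one of the two is $\le 1/2$.
\item[$\bullet$] \emph{Ex-ante:} $\normSI^{\text{ante}}(\ALG,\Gamma_1)=\mathbb{E}_P[p]$ and $\normSI^{\text{ante}}(\ALG,\Gamma_2)=1-\mathbb{E}_P[p]$; the two sum to $1$, so one is $\le 1/2$.
\item[$\bullet$] \emph{Average:} evaluating $\normSI$ on the expected utility profile $(u_A,v_D)=(1-\mathbb{E}_P[p],\mathbb{E}_P[p])$ reproduces exactly the values $\mathbb{E}_P[p]$ and $1-\mathbb{E}_P[p]$, and the same dichotomy concludes.
\end{itemize}

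\textbf{Main obstacle.} The coupling of randomness across $\Gamma_1$ and $\Gamma_2$ is intuitive but must be justified carefully, especially for the ex-post case where a ``realisation'' is in principle the whole trajectory of $\ALG$: one must argue that, restricted to the events affecting $\normSI$, the push-forward through $\ALG$ is determined by the price of Alice-Dori alone. The second delicate point is the average regime: a priori $\normSI^{\text{avg}}$ may differ substantially from $\normSI^{\text{ante}}$ (the very motivation for distinguishing the three levels), so one has to verify by direct substitution into the $\SI$ formula that on these particular instances they happen to coincide. Once these two points are dispatched, the rest is the elementary one-variable dichotomy above.
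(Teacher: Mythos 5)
Your proof is correct and follows essentially the same route as the paper's: both rest on the observation that the split of the Alice--Dori surplus is committed before the two instances diverge, so whichever side expects less than half of it forms a large blocking coalition with the later, wasted arrival in one of the two instances. The only difference is bookkeeping — the paper invokes Proposition \ref{prop:post_bounds_ante_bounds_worst} to collapse all three regimes into a single comparison of $\lambda^{\text{post}}$ against $\normSI^{\text{avg}}$, whereas you verify the ex-post, ex-ante, and average cases by direct per-realization computation.
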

	\begin{proof}
		From Proposition \ref{prop:post_bounds_ante_bounds_worst}, given $\ALG\in\mathcal{A}$, it is enough to prove that either $\frac{1}{2}\cdot\lambda^{\text{post}}(\ALG,\Gamma_1) \geq \mathcal{J}^{\text{avg}}(\ALG,\Gamma_1)$ or $\frac{1}{2}\cdot\lambda^{\text{post}}(\ALG,\Gamma_2) \geq \mathcal{J}^{\text{avg}}(\ALG,\Gamma_2)$. Let $\ALG\in\mathcal{A}$ be a randomized algorithm. Suppose, once $\ALG$ matches Alice and Dori, it holds $\mathbb{E}[u_{\text{Alice}}(\ALG)] \geq \mathbb{E}[v_{\text{Dori}}(\ALG)]$. Notice that this assumption is independent on the chosen instance. Choose the first instance, where Bob was already present on the market and the second (and last) edge to arrive is between him and Dori. Since Dori is already matched, the arriving edge is wasted. In this case, therefore, $\lambda^{\text{post}}(\ALG,\Gamma_1) = \frac{1}{1} = 1$ while
		\begin{align*}
			\SI^{\text{avg}}(\ALG,\Gamma_1) = a_{\text{Bob}, \text{Dori}} - \mathbb{E}[v_{\text{Dori}}(\ALG)] \geq \frac{1}{2},
		\end{align*}
		since $\mathbb{E}[u_{\text{Alice}}(\ALG)] + \mathbb{E}[v_{\text{Dori}}(\ALG)] = 1$. We conclude, 
		\begin{align*}
			\normSI^{\text{avg}}(\ALG,\Gamma_1) = 1 - \frac{\SI^{\text{avg}}(\ALG,\Gamma_1)}{\opt} \leq \frac{1}{2} = \frac{1}{2}\cdot\lambda^{\text{post}}(\ALG,\Gamma_1).
		\end{align*}
		The proof if $\mathbb{E}[u_{\text{Alice}}(\ALG)] \leq \mathbb{E}[v_{\text{Dori}}(\ALG)]$ is analogous considering the second edge arrival instance.
	\end{proof}
	\vspace{-0.12cm}
	
	Proposition \ref{prop:tightness_half_prices} highlights the difficulty of obtaining good stability bounds in edge arrival models when studying stable matchings with transferable utility. As exposed by \citet{rochford_symmetrically_1984}, achieving stability requires respecting the agents’ threat levels, i.e., the best utility each agent can secure outside their current match. In the edge arrival setting, however, these threat levels evolve dynamically on both sides of the market, which makes stability particularly challenging to maintain. This problem is only \textit{partially} present in the vertex arrival model, as the threat levels of the buyers (the arriving agents) is much more determined at the moment of arrival.

	\subsection{Vertex Arrival Model}
	
	% \UG{Cette section 4.3 passe de online matching a' online assignment game un peu en liberte', et on se perd un peu avec le pricing, mais on ne peux pas faire mieux sans ajouter une page. Peut etre une ligne pourrait clarifier cela, ou alors faire attention qu'on fait bien la distinction dans l'introduction.}
	
	In this section, we distinguish between two variants of the vertex arrival model: vertex-weighted and edge-weighted with free disposal. In the vertex-weighted model, the generated utilities do not depend on the identity of the buyer; that is, for any $(i,i',j) \in B \times B \times S$, whenever both $a_{i,j}$ and $a_{i',j}$ are not zero, then $a_{i,j} = a_{i',j}$.
	
	The edge-weighted model, by contrast, captures the classical situation in which different buyer–seller pairs may generate different utilities. In addition, the free-disposal assumption states that when a new buyer arrives, the decision-maker may unmatch a previously formed pair in order to reassign the seller to the arriving buyer, thereby leaving the previously matched buyer unmatched. 
	
	In each of these models, previous works have studied the design of \emph{competitive} algorithms (see \cite{echenique_online_2023} for a complete survey). Informally, a randomized algorithm is competitive if it achieves a constant factor for the optimality ratio over all instances (the constant not depending on the instance). In our notation, an algorithm is competitive if $\lambda^{post}$ (for deterministic algorithms) and $\lambda^{avg}$ (for randomized algorithms) can be lower bounded by a constant not depending on the instance. Therefore, we can leverage the literature results to obtain stability guarantees for randomized algorithms.
	
	% Intuitively, under both the vertex-weighted and the edge-weighted with free disposal settings, the threat levels of the sellers are not affected by the future buyers arrivals, allowing 
	\medskip
	
	\noindent\textbf{Vertex-weighted}. In the vertex-weighted setting, \citet{karp1990optimal} introduced the \emph{Ranking} algorithm which, in its alternative version (see Algorithm \ref{alg:ranking} in Appendix \ref{sec:missing proofs}), it randomly sets prices to goods before the arrival of any agent, and upon arrival of a buyer, it matches it greedily. \citet{aggarwal2011online} proved that Ranking is optimal, that is, it achieves an average competitive ratio of $1-\frac{1}{e}$. A closer examination of the proof of \textit{Ranking}'s optimality shows that it defines transaction prices inducing expected utilities that belong to the $\kappa$-approximated core.  
	In particular, we are able to restate the result of \citet{aggarwal2011online} in our terminology:
	
	\begin{proposition}\label{prop:vertex_weighted_kappa_post_is_opt}
		Consider the Ranking algorithm in the vertex-weighted setting. It holds,
		\begin{align*}
			\min_{\text{Instance } \Gamma} \kappa^{\text{avg}}(\text{Ranking},\Gamma) = 1 - \frac{1}{e}.
		\end{align*}
	\end{proposition}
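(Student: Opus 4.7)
The plan is to establish both directions of the claimed equality, by first fixing the pricing implicit in \textit{Ranking} and then combining a per-edge primal-dual inequality with a classical tight family of instances. In the vertex-weighted setting, each seller $j$ draws an independent rank $y_j\sim U[0,1]$; when buyer $i$ arrives, \textit{Ranking} matches $i$ to the available seller $j^*$ maximizing $a_{i,j}\cdot(1-e^{y_j-1})$. I would set the transaction price so that the seller keeps the discounted share of the surplus,
\begin{align*}
p_{j^*} := c_{j^*} + a_{i,j^*}\cdot e^{y_{j^*}-1},
\end{align*}
which yields the pointwise split $u_i = a_{i,j^*}(1-e^{y_{j^*}-1})$ and $v_{j^*} = a_{i,j^*}\cdot e^{y_{j^*}-1}$, with $u_i+v_{j^*} = a_{i,j^*}$ on every matched pair.

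For the lower bound $\kappa^{\text{avg}}(\textit{Ranking},\Gamma)\geq 1-1/e$, I would invoke the central primal-dual lemma of \citet{aggarwal2011online}. Translated into our utility language, that lemma states that for every potential edge $(i,j)\in B\times S$,
\begin{align*}
\mathbb{E}_{\mathbf{y}}[u_i(\textit{Ranking},\Gamma) + v_j(\textit{Ranking},\Gamma)] \geq (1-1/e)\cdot a_{i,j}.
\end{align*}
Because the $\kappa$-approximate core condition is linear in the utility vector, this inequality is exactly $(u(\textit{Ranking},\Gamma), v(\textit{Ranking},\Gamma))\in C_{1-1/e}(\Gamma)$, and by \Cref{thm:allocations_are_always_kappa_core} this gives $\kappa^{\text{avg}}(\textit{Ranking},\Gamma)\geq 1-1/e$ on every instance.

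For the matching upper bound I would use the classical tight family for \textit{Ranking}: the upper-triangular unit-weight instances $\Gamma_n$ on $n$ buyers and $n$ sellers where buyer $i$ is connected precisely to sellers $i,\ldots,n$. It is well known \cite{karp1990optimal} that $\lambda^{\text{avg}}(\textit{Ranking},\Gamma_n)\to 1-1/e$ as $n\to\infty$; combined with $\kappa^{\text{avg}}\leq\lambda^{\text{avg}}$ from \Cref{prop:post_bounds_ante_bounds_worst}, this gives $\inf_\Gamma \kappa^{\text{avg}}(\textit{Ranking},\Gamma)\leq 1-1/e$, closing the equality.

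The main obstacle is the identification step in the second paragraph: one must verify that the dual variables $(\alpha_i,\beta_j)$ used as accounting devices in the proof of \citet{aggarwal2011online} coincide, as random variables and hence in expectation, with the utilities $(u_i,v_j)$ induced by the pricing rule above, rather than being mere bookkeeping objects. Once this identification is made, the per-edge approximate dual feasibility of the classical \textit{Ranking} analysis transfers verbatim to the $(1-1/e)$-approximate core condition, and no further machinery is needed.
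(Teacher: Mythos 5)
Your argument is correct and is essentially the paper's own: the paper proves this proposition precisely by observing that the primal-dual analysis of \textit{Ranking} (Lemma 5.5 in the survey it cites, i.e.\ the per-edge inequality $\mathbb{E}[\alpha_i+\beta_j]\geq(1-1/e)\,a_{i,j}$) already defines transaction prices whose expected utilities lie in the $(1-1/e)$-approximate core, with tightness inherited from the classical hard instances via $\kappa^{\text{avg}}\leq\lambda^{\text{avg}}$. Your explicit pricing rule and the identification of the dual variables with the agents' utilities is exactly the ``closer examination'' the paper alludes to, so no gap remains.
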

	
	A more recent version of the proof of Ranking's optimality (Theorem 5.6  in \cite{echenique_online_2023}) is based on a technical result (Lemma 5.5 in \cite{echenique_online_2023}), which corresponds exactly to Proposition \ref{prop:vertex_weighted_kappa_post_is_opt}. Plugging this result in Proposition \ref{prop:post_bounds_ante_bounds_worst}, we conclude that Ranking achieves an average stability index and both ex-ante and average optimality ratio of $1-\frac{1}{e}$ over all assignment games.
	
	Regarding $\lambda^{\text{post}}$, related to the competitive ratio of deterministic algorithms, \citet{aggarwal2011online} proved that no algorithm can do better than $1/2$ and that \textit{Greedy}, the algorithm that matches the arriving vertex to the most profitable neighbor. We extend this bound to $\kappa^{\text{post}}$ and $\kappa^{\text{ante}}$ by complementing the Greedy algorithm with our pricing method \textit{Half}, define in Proposition \ref{prop:randomized_half_prices}.
	
	\begin{proposition}\label{prop:vertex_weighted_worst_is_one_half}
		Consider the vertex-weighted setting. It follows,
		\begin{align*}
			\min_{\text{Instance } \Gamma} \lambda^{\text{post}}(\text{Greedy},\Gamma) = \min_{\text{Instance } \Gamma} \kappa^{\gamma} (\text{Greedy}+\textit{Half}, \Gamma) = \frac{1}{2},
		\end{align*}
		with $\gamma \in\{\text{post}, \text{ante}\}$ and moreover, no algorithm can do better in any of both cases.
	\end{proposition}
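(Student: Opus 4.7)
The plan is to decompose the statement into three claims: for every instance $\Gamma$, (i) $\lambda^{\text{post}}(\textit{Greedy},\Gamma)\ge\tfrac{1}{2}$, (ii) $\kappa^{\text{post}}(\textit{Greedy}+\textit{Half},\Gamma)\ge\tfrac{1}{2}$, and (iii) both inequalities are tight, with the $\tfrac{1}{2}$-barrier unbeatable by any algorithm. The first and third claims reuse classical machinery from online bipartite matching; the second is the genuinely new contribution.

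For claim (i) I would invoke the standard $\tfrac{1}{2}$-competitiveness of Greedy in the vertex-weighted setting via a charging argument. Let $\mu$ be Greedy's output and $\mu^\star$ an optimal matching, and denote $w_j:=a_{\cdot,j}$ the common weight of edges incident to $j$. Each edge $(i^\star,j^\star)\in\mu^\star\setminus\mu$ is charged to an adjacent Greedy edge: either to the $\mu$-edge incident to $i^\star$, whose seller was picked over $j^\star$ (so its weight dominates $w_{j^\star}$); or, if $i^\star$ is unmatched by Greedy, to the $\mu$-edge incident to $j^\star$ (which must exist, for otherwise Greedy would have matched $i^\star$ to $j^\star$). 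Each $\mu$-edge is charged at most twice and by a weight no larger than its own, yielding $\SW(\mu^\star)\le 2\,\SW(\mu)$, hence $\lambda(\mu)\ge\tfrac12$ \emph{deterministically}, which implies the same bound for $\lambda^{\text{post}}$.

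For claim (ii) I would exploit the explicit form of \textit{Half}: every matched pair $(i,j)\in\mu$ satisfies $u_i=v_j=\tfrac{1}{2}a_{i,j}=\tfrac{1}{2}w_j$. Given an arbitrary $(i',j')\in B\times S$ with $a_{i',j'}>0$, I would verify $u_{i'}+v_{j'}\ge\tfrac{1}{2}a_{i',j'}$ by case analysis on Greedy's status of $i'$ and $j'$: if $j'$ is matched then $v_{j'}=\tfrac{1}{2}w_{j'}=\tfrac{1}{2}a_{i',j'}$ already suffices; if $j'$ is unmatched but $i'$ is matched to some $j_0$, then at the moment $i'$ arrived $j'$ was still free (Greedy never releases a seller), so the greedy selection forces $w_{j_0}\ge w_{j'}$ and thus $u_{i'}\ge\tfrac{1}{2}a_{i',j'}$; if $i'$ is unmatched, Greedy's maximality guarantees that every neighbor of $i'$ was already taken when $i'$ arrived, so $j'$ is matched and we reduce to the first case. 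This gives $(\mu,\boldsymbol{p}^{\it half})\in C_{1/2}(\Gamma)$ for every outcome, hence $\kappa^{\text{post}}\ge\tfrac{1}{2}$.

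For claim (iii) I would use the classical Karp--Vazirani--Vazirani-style instances with $B=\{i_1,i_2\}$, $S=\{\alpha,\beta\}$ and unit weights: $i_1$ arrives first with both incident edges, and in $\Gamma_\alpha$ (resp.\ $\Gamma_\beta$) the second buyer $i_2$ is connected only to $\alpha$ (resp.\ $\beta$). Since $i_1$ must be committed before $i_2$ is observed, for any (possibly randomized) algorithm the support on at least one of $\Gamma_\alpha,\Gamma_\beta$ contains a realization that assigns $i_1$ to the ``wrong'' seller and leaves $i_2$ unmatched, giving $\lambda=\tfrac{1}{2}$ on that realization and instance. Hence $\min_\Gamma\lambda^{\text{post}}(\ALG,\Gamma)\le\tfrac{1}{2}$, and by $\kappa^{\text{post}}\le\lambda^{\text{post}}$ (Proposition~\ref{prop:post_bounds_ante_bounds_worst}) the same upper bound holds for $\kappa^{\text{post}}$; tightness of Greedy and Greedy+\textit{Half} is witnessed by the very same instance. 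The main obstacle is the second paragraph: one must combine both Greedy invariants (maximality and heaviest-available-seller selection) to certify the core-type inequality on pairs that are \emph{not} realized by the algorithm, and it is exactly this step that breaks in the edge-weighted model, explaining the need for free disposal and more elaborate randomization later in the section.
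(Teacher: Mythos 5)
Your proposal is correct and follows essentially the same route as the paper: the heart of the matter is your claim (ii) case analysis showing that Greedy+\textit{Half} always lands in the $\tfrac{1}{2}$-approximate core (split on whether $j'$ is matched, use that an unmatched seller was available throughout the run and that Greedy is maximal), together with the two-instance unit-weight lower bound, and both steps coincide with the paper's proof; the only structural difference is that the paper obtains the $\lambda^{\text{post}}$ bound for Greedy from $\kappa^{\text{post}}\leq\normSI^{\text{post}}\leq\lambda^{\text{post}}$ (Proposition~\ref{prop:post_bounds_ante_bounds_worst}) rather than via a separate charging argument. One small caveat on your redundant claim (i): the charging dichotomy is keyed on whether $i^\star$ is Greedy-matched, but when both $i^\star$ and $j^\star$ are matched and $j^\star$ was already taken before $i^\star$ arrived, the seller assigned to $i^\star$ need not dominate $w_{j^\star}$; the charge should go to the Greedy edge at $j^\star$ (of weight exactly $w_{j^\star}$) whenever $j^\star$ is matched, and to the edge at $i^\star$ only otherwise. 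Since claim (i) is implied by claim (ii) anyway, this slip does not affect the overall argument.
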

	
	\begin{proof}[Proof Sketch]
		% The $\frac{1}{2}$ for $\lambda^{\text{post}}$ comes from the biggest lost when matching a pair. Indeed, matching $i \in B$ and $j \in S$, prevent to create at most two pairs of the optimal matching, let us say $(i,j')$ and $(i',j)$. Then, being in a vertex weighted instance, we have $u_{i,j} = u_{i,j'}$ and by the nature of $\textit{Greedy}$ either $i'$ is matched (and its utility created), either $u_{i',j} \leq u_{i,j}$. Thus, at most half of the utility is lost.
		The $\text{Greedy}+\textit{Half}$ algorithm (see Algorithm \ref{alg:greedy_half} in Appendix \ref{sec:missing proofs}) sets prices so for any pair of agents $(i,j) \in B\times S$, each of them obtains at least half of utility they generate together, obtaining that the algorithm produces allocations in the $\kappa$-approximated core with $\kappa\geq\nicefrac{1}{2}$. The tightness then is proved by exhibiting an instance where the algorithm achieves a value $\kappa$ exactly equal to $\nicefrac{1}{2}$.  
		The full proof is included in Appendix \ref{sec:missing proofs}.
	\end{proof}
	
	\noindent\Cref{tab:guarantees_vertex_weighted} summarizes our optimality and stability guarantees for the vertex-weighted setting.
	
	% Unless indicated by an equality, our bounds are exact and are optimal, i.e., they bound the performance of the best algorithm (e.g. Ranking for the avg column and Greedy with equal pricing for the ex-post column).
	
	\begin{table}[H]
		\caption{Stability and optimality guarantees in vertex-weighted online assignment games. Remark all given values are tight, i.e., no algorithm can do better and for each of them, at least one algorithm achieves it. The results in bold correspond to our contributions, while the others are reformulations of literature results.}
		\label{tab:guarantees_vertex_weighted}
		\centering
		\begin{tabular}{c|c|c|c}
			& ex-post & ex-ante & avg\\
			\hline
			$\kappa$   & $\pmb{\nicefrac{1}{2}}$ & $\pmb{\nicefrac{1}{2}}$   & $1-\nicefrac{1}{e}$ \\
			$\normSI$  & $\pmb{\nicefrac{1}{2}}$ & ? &  $\boldsymbol{1-\nicefrac{1}{e}}$\\
			$\lambda$  & $\nicefrac{1}{2}$ &  $1-\nicefrac{1}{e}$     &  $1-\nicefrac{1}{e}$
		\end{tabular} 
	\end{table}
	Regarding the missing case, preliminary simulations suggest its value to be $1-\nicefrac{1}{e}$.
	\smallskip
	
	\noindent\textbf{Edge-weighted with free disposal}. The free disposal assumption was introduced by \citet{feldman2009online} due to the poor performances of online algorithms in the general edge-weighted model (Theorem 5.13 \cite{echenique_online_2023}). 
	In our terminology, their result states that for any randomized matching algorithm $\ALG$, it holds
	\begin{align*}
		\min_{\text{Instance } \Gamma}\lambda^{\text{avg}}(\ALG,\Gamma) = 0.
	\end{align*}
	In particular, from Proposition \ref{prop:post_bounds_ante_bounds_worst}, it follows that no randomized algorithm can achieve a constant factor for any of our guarantees.
	
	Under the free disposal assumption, in exchange, several works \cite{fisher2009analysis,lehmann2001combinatorial,blanc2022multiway} have proved the following guarantees, which we reformulate in our notation:
	
	\begin{proposition}\label{prop:lambda_guarantees_edge_weighted_free_disposal}
		Consider the edge-weighted with free disposal setting. For any randomized algorithm $\ALG$, it holds,
		\begin{align*}
			&\min_{\text{Instance } \Gamma} \lambda^{\text{post}}(\ALG,\Gamma) \leq \frac{1}{2},
		\end{align*}
		while, for any instance $\Gamma$, $\lambda^{\text{post}}(\text{Greedy},\Gamma) \geq \frac{1}{2}$. In addition, there exists a randomized algorithm $\ALG$ achieving,
		\begin{align*}
			\min_{\text{Instance } \Gamma} \lambda^{\text{ante}}(\ALG,\Gamma) = \min_{\text{Instance } \Gamma} \lambda^{\text{avg}}(\ALG,\Gamma) = 0.536.
		\end{align*}
	\end{proposition}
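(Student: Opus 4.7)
The plan is to interpret each of the three claims as a reformulation of a classical online matching result, exploiting the fact that $\lambda$ is affine in the utility vector. Indeed, for any randomized $\ALG$ and instance $\Gamma$,
$$\mathbb{E}_{(\mu,\mathbf p)\sim \ALG(\Gamma)}\bigl[\SW(\mu)\bigr] \;=\; \sum_{i\in B} u_i(\ALG,\Gamma) + \sum_{j\in S} v_j(\ALG,\Gamma),$$
and $\opt$ depends only on $\Gamma$, so $\lambda^{\text{ante}}(\ALG,\Gamma) = \lambda^{\text{avg}}(\ALG,\Gamma)$ identically. Consequently, the two quantities appearing in the last equality of the statement coincide and reduce to the standard randomized competitive ratio.

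For the upper bound $\min_\Gamma \lambda^{\text{post}}(\ALG,\Gamma) \leq \nicefrac{1}{2}$, I would rely on a pair of two-buyer adversarial instances. Fix $\varepsilon > 0$, two offline sellers $s_1,s_2$ with $c_{s_1}=c_{s_2}=0$, and define $\Gamma_k$ (for $k\in\{1,2\}$) as follows: buyer $b_1$ arrives first with $h_{b_1,s_1}=h_{b_1,s_2}=1$, then $b_2$ arrives with a single edge of weight $1-\varepsilon$ to $s_k$. Since $\ALG$ sees identical information at the first step in both instances, its step-one distribution $(p_1,p_2,1-p_1-p_2)$ over $\{s_1,s_2,\text{unmatched}\}$ is the same in $\Gamma_1$ and $\Gamma_2$. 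If $p_k > 0$ for some $k$, then $supp(\ALG(\Gamma_k))$ contains a realization in which $b_1$ is matched to $s_k$ at step one; no continuation of this realization through step two can reach $\SW > 1$ (either keep the first match, or free-dispose and obtain $1-\varepsilon$), while $\opt(\Gamma_k) = 2-\varepsilon$, giving $\lambda^{\text{post}}(\ALG,\Gamma_k) \leq 1/(2-\varepsilon)$. If $p_1 = p_2 = 0$, then $b_1$ is always unmatched and every realization of either instance has $\SW \leq 1-\varepsilon$, so $\lambda^{\text{post}} \leq (1-\varepsilon)/(2-\varepsilon)$. Letting $\varepsilon \to 0$ yields the claim. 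The subtle point, which I expect to be the main delicate one in the writeup, is that the adversary branches on a specific \emph{deterministic} outcome inside $supp(\ALG(\cdot))$ rather than on the expected behavior of $\ALG$ --- this is exactly what the minimum-over-support in the definition of $\lambda^{\text{post}}$ allows.

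For $\lambda^{\text{post}}(\text{Greedy},\Gamma) \geq \nicefrac{1}{2}$, I would reproduce the classical charging argument of \citet{fisher2009analysis,lehmann2001combinatorial}, which applies realization by realization. Let $\mu$ be the matching produced by Greedy on $\Gamma$ and $\mu^*$ an optimal matching. Each edge $(i,j) \in \mu^*$ is charged to the (at most two) Greedy edges incident to $i$ or $j$; the Greedy selection rule together with free disposal ensures that $a_{i,j}$ is bounded by the sum of the utilities of those Greedy edges, since otherwise Greedy would have re-matched upon the arrival of $i$. Summing over $\mu^*$ and observing that every Greedy edge is charged at most twice, we obtain $\SW(\mu^*) \leq 2\,\SW(\mu)$, hence $\lambda(\mu) \geq \nicefrac{1}{2}$.

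Finally, for the $0.536$ bound, I would cite directly the randomized algorithm of \citet{blanc2022multiway}, which improves on the breakthrough of \citet{fahrbach2022edge} to guarantee expected social welfare at least $0.536 \cdot \opt$ in the edge-weighted free-disposal vertex-arrival model. Since this quantity equals $\lambda^{\text{avg}}(\ALG,\Gamma)$ (and hence $\lambda^{\text{ante}}(\ALG,\Gamma)$) in our notation by the identity established in the first paragraph, the translation is immediate.
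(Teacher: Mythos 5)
Your proposal is correct. Note that the paper itself does not prove this proposition: it is explicitly presented as a reformulation of results from the cited literature (\citet{fisher2009analysis,lehmann2001combinatorial,blanc2022multiway}), and no argument appears in the appendix. Your writeup therefore supplies actual proofs where the paper only translates notation, but the content of your arguments coincides with the classical ones the paper is citing: the two-instance adversary for the ex-post upper bound, the charging argument for Greedy, and the Blanc--Charikar algorithm for the $0.536$ bound, together with the linearity observation $\lambda^{\text{ante}}=\lambda^{\text{avg}}$ that the paper already records in its remark preceding Proposition \ref{prop:post_bounds_ante_bounds_worst}. Two small polish points. First, your $\varepsilon$ is unnecessary: taking $\varepsilon=0$ directly, both cases (some $p_k>0$, or $p_1=p_2=0$) yield a realization with $\SW\leq 1$ against $\opt=2$, so the bound $\leq\nicefrac{1}{2}$ is attained on a single instance rather than only in the limit. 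Second, your charging argument for Greedy is stated slightly loosely, since under free disposal the ``Greedy edge incident to $i$'' may have been disposed of by the end of the run; the clean version bounds $a_{i,j}$ for $(i,j)\in\mu^*$ by the \emph{marginal gain} Greedy collected when $i$ arrived plus the final utility held at $j$ (which only increases over time), and then sums over $\mu^*$ to get $\SW(\mu^*)\leq 2\,\SW(\mu)$. Neither point affects correctness.
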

	
	Regarding the stability of these solutions, as for the edge arrival model, we obtain the following result.
	
	\begin{proposition}\label{prop:kappa_guarantees_edge_weighted_free_disposal}
		Consider the edge-weighted with free disposal setting. For any randomized algorithm $\ALG$, it holds,
		\begin{align*}
			&\min_{\text{Instance } \Gamma} \kappa^{\text{post}}(\ALG,\Gamma) = \min_{\text{Instance } \Gamma} \kappa^{\text{ante}}(\ALG,\Gamma) = 0.
		\end{align*}
	\end{proposition}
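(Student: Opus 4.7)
The plan is to show $\min_{\Gamma} \kappa^{\text{ante}}(\ALG, \Gamma) = 0$ for an arbitrary randomized algorithm $\ALG$; combined with the per-instance inequality $\kappa^{\text{post}} \leq \kappa^{\text{ante}}$ from \Cref{prop:post_bounds_ante_bounds_worst}, this immediately implies $\min_\Gamma \kappa^{\text{post}}(\ALG, \Gamma) = 0$ as well.

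The construction is a parametric family $\Gamma_{n,M}$ with $n \geq 2$ sellers $j_1, \ldots, j_n$ of null reservation cost and buyers arriving sequentially. The first buyer $i_1$ has $h_{i_1, j_k} = 1$ for every $k$. Fixing $\ALG$, its randomized decision on $i_1$ induces a distribution $(q_0, Q_1, \ldots, Q_n)$ over ``unmatched'' and ``match to $j_k$''; by pigeonhole, $Q_{k_0} := \min_k Q_k \leq (1-q_0)/n \leq 1/n$. The adversary designates this $k_0$ as the ``safe'' seller and, for every $\ell \neq k_0$, introduces a subsequent buyer $i_2^{(\ell)}$ with $h_{i_2^{(\ell)}, j_\ell} = M$ and zero valuation on other sellers, so that $j_{k_0}$ is the unique seller never targeted by a high-valuation arrival.

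The main observation is a dichotomy on the final allocation. In the event of probability $Q_{k_0}$ where $i_1$ is matched to $j_{k_0}$, the subsequent high-valuation arrivals match their respective targets without conflict, and prices can be chosen so that $\kappa \leq 1$. In the complementary event ($i_1$ unmatched or matched to some $j_{\ell^*}$ with $\ell^* \neq k_0$), if the algorithm accepts the swap when $i_2^{(\ell^*)}$ arrives---or if $i_1$ was initially unmatched---then $i_1$ and $j_{k_0}$ are both unmatched at the end, and the pair $(i_1, j_{k_0})$ with $a = 1$ gives $\kappa = 0$ exactly. Otherwise the swap is rejected, leaving $i_2^{(\ell^*)}$ unmatched: the pair $(i_2^{(\ell^*)}, j_{\ell^*})$ yields $u + v = p_1 \leq 1$ with $a = M$, while $(i_1, j_{k_0})$ yields $u + v = 1 - p_1$ with $a = 1$. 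Optimising over $p_1 \in [0,1]$ gives $\kappa \leq \min(p_1/M, 1 - p_1) \leq 1/(M+1)$.

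Combining the two branches, $\kappa^{\text{ante}}(\ALG, \Gamma_{n,M}) \leq Q_{k_0} \cdot 1 + (1 - Q_{k_0}) \cdot (1/M) \leq 1/n + 1/M$, which tends to $0$ as $n, M \to \infty$. The main obstacle I anticipate is the case analysis of the complementary event: ensuring that the accept-swap branch forces $\kappa = 0$ through an orphaned pair, while the reject-swap branch forces $\kappa \leq 1/M$ via the unmatched high-valuation buyer, requires verifying that no choice of $p_1$ by the algorithm can simultaneously keep both $1 - p_1$ and $p_1/M$ bounded away from zero.
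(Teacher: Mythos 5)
Your proof is correct, and it rests on the same key mechanism as the paper's: pigeonhole over the algorithm's randomized choice for the early buyer(s) to identify an option taken with low probability, then punish every \emph{other} option with a high-valuation follow-up buyer, so that free disposal forces a dilemma --- swap and orphan the original pair (giving $\kappa=0$ via the pair $(i_1,j_{k_0})$ with $a=1$ and zero total utility), or refuse and orphan the high-valuation buyer (giving $\kappa\leq \min(p_1/M,\,1-p_1)\leq 1/(M+1)$, which indeed holds for \emph{every} choice of $p_1$, so the obstacle you flag is not an obstacle). Where you differ is the amplification gadget: the paper uses $l$ disjoint binary gadgets (one buyer, two sellers each) and a counting argument over the $2^l$ product sequences to find a joint choice realized with probability at most $2^{-l}$, explicitly to sidestep correlations between the gadgets; you instead use a single buyer fanning out to $n$ sellers, where plain pigeonhole gives $Q_{k_0}\leq 1/n$ with no correlation bookkeeping at all. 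Both yield a two-parameter bound ($1/n+1/M$ versus $2^{-l}+1/W$) tending to $0$, so both close the argument; your construction is somewhat more elementary, while the paper's parallel-gadget instance is the one more commonly seen in edge-weighted lower bounds. One small point worth making explicit in a write-up: in the branch where $i_1$ is left unmatched on arrival, you should note that $j_{k_0}$ is never targeted by any later buyer and $i_1$ cannot be re-matched after departing, so the pair $(i_1,j_{k_0})$ is indeed permanently orphaned; you assert this correctly but it is the one place where the online/free-disposal semantics must be invoked.
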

	
	\begin{proof}[Proof Sketch]
		The proposition is proved by constructing two vertex-arrival instances, each of them with two buyers and two sellers. Similarly to the examples showed in Proposition \ref{prop:tightness_half_prices}, depending on the decision of the algorithm on the first arriving buyer, we choose the instance where the algorithm makes a mistake. In particular, it at least one of the two, the algorithm achieves an allocation in the $0$-approximated core for $\kappa^\text{post}$. Regarding $\kappa^\text{ante}$, we construct an instance by considering several copies of the one created for $\kappa^\text{post}$. The full proof is included in Appendix \ref{sec:missing proofs}.
	\end{proof}
	
	The previous results, combined with Proposition \ref{prop:randomized_half_prices}, are summarized in Table~\ref{tab:guarantees_edge_weighted_with_free_disposal}.

	\begin{table}[H]
		\centering
		\caption{Stability and optimality guarantees in edge-weighted with free disposal online assignment games. Values with an inequality are lower bounds whose tightness remains open. Results in bold correspond to our contributions, while the others are reformulations of literature results.}
		\label{tab:guarantees_edge_weighted_with_free_disposal}
		\begin{tabular}{c|c|c|c}
			& ex-post & ex-ante & avg\\
			\hline
			$\kappa$   & $\boldsymbol{0}$  & $\boldsymbol{0}$ & ? \\
			$\normSI$   &  $\boldsymbol{\geq\nicefrac{1}{4}}$ & $\boldsymbol{\geq 0.268}$  &  $\boldsymbol{\geq 0.268}$\\
			$\lambda$   &  $\nicefrac{1}{2}$ &  $\geq 0.536$ &  $\geq 0.536$
		\end{tabular}
		% \EM{pour ante et avg on a des $>$, pour $\normSI$ et pour $\lambda$}
	\end{table}
	
	The value for $\kappa^{\text{avg}}$ remains unknown, although we conjecture the value is zero. Interestingly, when allowing side payments, the results of \citet{fahrbach2022edge} can be used to show that $\kappa^{\text{avg}} \bi \frac{1}{2}$.
	
	% \EM{A quel point c'est déja clair ce que c'est des side payments?}
	
	\section{Conclusions}\label{sec:conclusions}
	
	In this article, we initiated the study of stability in sub-optimal matchings and applied it to online assignment games, where either buyers or edges between buyers and sellers arrive sequentially. Our results show that stability naturally leads to optimality in the design of randomized algorithms for online matching, highlighting the study of stability in sub-optimal matchings as a promising and foundational research direction. 
	
	As a direction for future work, a formal study of the dynamics governing the evolution of the agents’ bargaining power (their ability to influence the split of the generated utility) and threat levels (the best utility an agent can obtain outside of their assigned match) in online settings, paralleling the static analysis of \citet{rochford_symmetrically_1984}, could provide valuable insights into the design of more stable allocation algorithms.

	\section*{Acknowledgments}
	% \balance
	
	\noindent This work was supported by the ANR LabEx CIMI (grant ANR-11-LABX-0040) within the French State Programme “Investissements d’Avenir.”
	\medskip
	
	\noindent Funded by the European Union. Views and opinions expressed are however those of the author(s) only and do not necessarily reflect those of the European Union or the European Research Council Executive Agency. Neither the European Union nor the granting authority can be held responsible for them. This work is supported by ERC grant 101166894 “Advancing Digital Democratic Innovation” (ADDI).
	\vspace{-0.5cm}
	\begin{figure}[H]
		\centering
		\includegraphics[scale = 0.2]{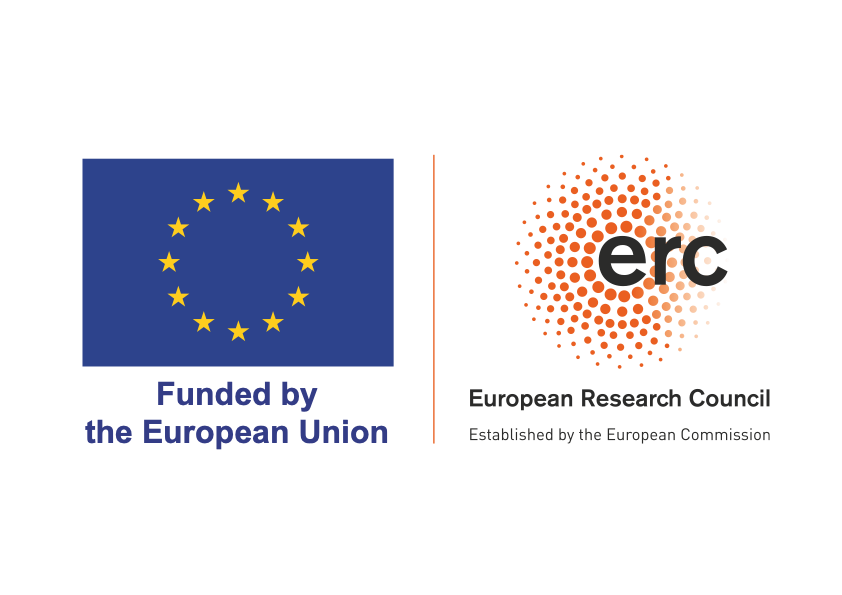}
	\end{figure}
	
	%%%%%%%%%%%%%%%%%%%%%%%%%%%%%%%%%%%%%%%%%%%%%%%%%%%%%%%%%%%%%%%%%%%%%%%%
	
	%%% The next two lines define, first, the bibliography style to be 
	%%% applied, and, second, the bibliography file to be used.
	
	\bibliographystyle{ACM-Reference-Format} 
	\bibliography{biblio-report}
	
	%%%%%%%%%%%%%%%%%%%%%%%%%%%%%%%%%%%%%%%%%%%%%%%%%%%%%%%%%%%%%%%%%%%%%%%%

	\appendix
	
	\section{Missing proofs}
	
	\label{sec:missing proofs}
	
	\newcommand{\smallexample}[3]{\raisebox{-0.5\height}{\begin{tikzpicture}
				\node[state, scale=0.7] (j1) {$\ja_{#1}$};
				\node[state, scale=0.7, below=0.6cm of j1] (j2) {$\jb_{#1}$};
				\node[state, scale=0.7, below right = 0.1cm and 1cm of j1] (i1) {$\ia_{#1}$};
				\node[state, scale=0.7, right = 1cm of i1] (i2) {$\ib_{#1}$};
				
				\draw (j1) edge[-, above] node{1} (i1);
				\draw (j2) edge[-, below] node{1} (i1);
				\ifnum#2=1 \draw (j1) edge[-, above, bend left] node{$#3$} (i2);
				\draw (j2) edge[-, below, bend right, white] node{$#3$} (i2);
				\else \draw (j2) edge[-, below, bend right] node{$#3$} (i2);
				\draw (j1) edge[-, above, bend left, white] node{$#3$} (i2);
				\fi
	\end{tikzpicture}}}

	\begin{lemma}
		\label{lem:IR_price_opt}
		Let $(\allocation)$ be an allocation. If, for every $(i,j)\in\mu$, $a_{i,j} \geq 0$, then there exists $\mathbf{p}'$ such that $(\mu, \mathbf{p}')$ is individually rational and $\SI(\allocation) \geq \SI(\mu, \mathbf{p}')$.
	\end{lemma}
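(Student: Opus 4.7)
The plan is to construct $\mathbf{p}'$ explicitly by projecting each matched price onto the individually-rational interval, and then leverage \Cref{lemmma:subset_inst_and_stab_subsidy} (subset instability equals the minimum stabilizing subsidy) to show that a minimum-cost subsidy for $(\mu,\mathbf{p})$ can be rebalanced into a subsidy of the same cost for $(\mu,\mathbf{p}')$.

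\textbf{Construction.} For each matched pair $(i,j)\in\mu$, set $p'_j := \min\bigl(\max(p_j, c_j),\, h_{i,j}\bigr)$; the interval $[c_j, h_{i,j}]$ is non-empty thanks to $a_{i,j}\geq 0$, so the projection is well-defined and yields both $u_i(\mu,\mathbf{p}')\geq 0$ and $v_j(\mu,\mathbf{p}')\geq 0$. For unmatched $j$, keep $p'_j := c_j$. Individual rationality of $(\mu,\mathbf{p}')$ is then automatic.

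\textbf{Transporting the subsidy.} By \Cref{lemmma:subset_inst_and_stab_subsidy}, pick a minimizer $(\boldsymbol{\tau}^*, \boldsymbol{\eta}^*)$ of the stabilizing subsidy program for $(\mu,\mathbf{p})$, of cost $\SI(\mu,\mathbf{p})$. For each matched pair $(i,j)\in\mu$, denote $\Delta_j := p'_j - p_j$ and propose
\begin{align*}
\tau'_i := \tau^*_i + \Delta_{\mu_i}, \qquad \eta'_j := \eta^*_j - \Delta_j,
\end{align*}
leaving the remaining coordinates unchanged. Since moving $p_j$ to $p'_j$ shifts $u_i$ by $-\Delta_j$ and $v_j$ by $+\Delta_j$, the compensated sums $u_i(\mu,\mathbf{p}')+\tau'_i$ and $v_j(\mu,\mathbf{p}')+\eta'_j$ coincide with $u_i(\mu,\mathbf{p})+\tau^*_i$ and $v_j(\mu,\mathbf{p})+\eta^*_j$, respectively. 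Consequently, every IR constraint and every blocking-pair constraint in \Cref{def:minimum_stab_subsidy} carries over from $(\boldsymbol{\tau}^*,\boldsymbol{\eta}^*)$ on $(\mu,\mathbf{p})$ to $(\boldsymbol{\tau}',\boldsymbol{\eta}')$ on $(\mu,\mathbf{p}')$, and the total cost is preserved.

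The main obstacle is checking the non-negativity of $(\boldsymbol{\tau}',\boldsymbol{\eta}')$. When $p_j < c_j$ (so $\Delta_j > 0$), the original IR constraint $v_j(\mu,\mathbf{p}) + \eta^*_j \geq 0$ gives $\eta^*_j \geq \Delta_j$, hence $\eta'_j\geq 0$, while $\tau'_i = \tau^*_i + \Delta_j \geq 0$ is immediate; symmetrically, when $p_j > h_{i,j}$ (so $\Delta_j < 0$), the original IR constraint on $u_i$ yields $\tau^*_i \geq -\Delta_j$. Crucially, these two sub-cases are mutually exclusive precisely because $a_{i,j} \geq 0$ ensures $c_j \leq h_{i,j}$, and in the remaining case $p_j \in [c_j, h_{i,j}]$ we have $\Delta_j = 0$ and no adjustment is needed. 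Having exhibited a feasible stabilizing subsidy for $(\mu,\mathbf{p}')$ of cost $\sum_i \tau'_i + \sum_j \eta'_j = \SI(\mu,\mathbf{p})$, a final appeal to \Cref{lemmma:subset_inst_and_stab_subsidy} yields $\SI(\mu,\mathbf{p}') \leq \SI(\mu,\mathbf{p})$, as required.
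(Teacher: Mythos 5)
Your proof is correct, and it takes a genuinely different route from the paper's. Both arguments start from the same projection $p'_j := \min\bigl(\max(p_j,c_j),\,h_{\mu_j,j}\bigr)$, but they diverge after that. The paper works directly with the definition of subset instability: it partitions the agents into those whose utility is raised, lowered, or preserved by the projection, observes that the raised agents end at utility $0$ and are matched to the lowered ones, and then shows that any coalition witnessing $\SI(\mu,\mathbf p')$ can be enlarged by the raised agents into a coalition whose deficit under the original prices is at least as large; taking maxima gives $\SI(\allocation)\geq \SI(\mu,\mathbf p')$. You instead reduce to the LP characterization of \Cref{lemmma:subset_inst_and_stab_subsidy} and transport an optimal stabilizing subsidy across the price change via the shifts $\Delta_j = p'_j - p_j$. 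The key point that the compensated utilities $u_i+\tau_i$ and $v_j+\eta_j$ are invariant under the simultaneous shift makes feasibility and cost preservation immediate, so the only real work is the non-negativity of $(\boldsymbol{\tau}',\boldsymbol{\eta}')$, which you verify correctly using the original IR constraints and the fact that the two clamping cases are mutually exclusive when $a_{\mu_j,j}\geq 0$. Your argument is shorter and, in my view, more transparent than the coalition-enlargement bookkeeping; what it costs is a dependence on \Cref{lemmma:subset_inst_and_stab_subsidy} holding for allocations that need not be individually rational. That lemma is imported from \citet{jagadeesan2021learning} without proof, but the paper's own proof of \Cref{thm:opt_gap_and_subset_inst} already invokes it in exactly that generality (minimizing over arbitrary $\mathbf p$), so your usage is consistent with the paper's.
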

	
	\begin{proof}
		We define, for $j\in S$, $$p'_j := \left\{\begin{array}{ll}
			p_j & \text{if } p_j\in[c_j, h_{\mu_j, j}] \\
			h_{\mu_j, j} & \text{if } p_j > h_{\mu_j, j}\\
			c_j & \text{if } p_j< c_j
		\end{array}\right.$$ with $h_{\mu_j, j} = c_j$ if $\mu_j=j$. Note that, for $j \in S$, as $a_{\mu_j, j} \geq 0$, $c_j \leq h_{\mu_j,j}$ and thus $\mathbf{p}'$ is well defined. Let us split $B \cup S$ in $A^+$, $A^-$ and $A^=$ as the agents whose utility respectively increases, decreases and stagnates when passing from $\mathbf p$ to $\mathbf p'$. Note that the utility of agents in $A^+$ is $0$, while the utility of agents in $A^-$ is the utility generated by their match. Finally, note that $\mu(A^+) = A^-$, as if $\mathbf p'$ increase someone utility, it decreases the utility of her match.
		
		Let $B', S' \subset B, S$ be a coalition and $\mu' \in \matching{B'}{S'}$. We need to show that $$\SW(\mu') - \SW_{|B',S'}(\mu, \mathbf p') \leq I(\allocation).$$
		To do so, let us consider the coalition $B'', S''$ defined as $B'' = B' \cup (A^+\cap B)$ and $S'' = S' \cup(A^+\cap S)$ with the matching $\mu'$ (where the new agents are unmatched). We then just need to prove that $$\SW(\mu') - \SW_{|B',S'}(\mu, \mathbf p') \leq \SW(\mu') - \SW_{|B'', S''}(\allocation).$$
		Denote
		$$ D :=\ \SW(\mu') - \SW_{|B'',S''}(\allocation) - \left(\SW(\mu') - \SW_{|B',S'}(\mu, \mathbf p')\right).$$
		It follows,
		\begin{align*}
			D =\ & \SW_{|B',S'}(\mu, \mathbf p') - \SW_{|B'',S''}(\allocation) \\ =\ & \sum\limits_{i \in B'} u_i(\mu, \mathbf p') \ + \sum\limits_{j\in S'} v_j(\mu, \mathbf p')\  - \sum\limits_{i \in B''} u_i(\mu, \mathbf p) \ - \sum\limits_{j\in S''} v_j(\mu, \mathbf p),
		\end{align*}
		we need to prove that $D\geq 0$. As the utility of agent in $A^=$ is the same at $(\mu, \mathbf p)$ and $(\mu, \mathbf p')$, we can remove them from the sum,
		\begin{align*}
			D = & \sum\limits_{i\in B'\setminus A^=} u_i(\mu, \mathbf p') \ + \sum\limits_{j\in S'\setminus A^=} v_j(\mu, \mathbf p') - \sum\limits_{i\in B''\setminus A^=} u_i(\mu, \mathbf p) \ - \sum\limits_{j\in S''\setminus A^=} v_j(\mu, \mathbf p).
		\end{align*}
		Then, we can split the sums on $A^+$ and $A^-$, as $(A^+, A^-, A^=)$ is a partition of the set of all agents. Therefore, we have
		\begin{align*}
			\sum\limits_{i\in B'\setminus A^=} u_i(\mu, \mathbf p') &= \sum\limits_{i\in B'\cap A^+} u_i(\mu, \mathbf p') \ + \sum\limits_{i\in B'\cap A^-} u_i(\mu, \mathbf p')\\
			&= \sum\limits_{i\in B'\cap A^+} 0 \ + \sum\limits_{i\in B'\cap A^-} a_{i,\mu_i}
			= \sum\limits_{i\in B'\cap A^-} a_{i,\mu_i}.
		\end{align*}
		Similarly,
		\begin{align*}
			\sum\limits_{j\in S'\setminus A^=} v_j(\mu, \mathbf p') &= \sum\limits_{j\in S'\cap A^+} v_j(\mu, \mathbf p') \ + \sum\limits_{j\in S'\cap A^-} v_j(\mu, \mathbf p')\\
			&= \sum\limits_{j\in S'\cap A^+} 0 \ + \sum\limits_{j\in S'\cap A^-} a_{\mu_j,j}
			= \sum\limits_{j\in S'\cap A^-} a_{\mu_j,j}.    
		\end{align*}
		Regarding $B''\setminus A^{=}$,
		\begin{align*}
			\sum\limits_{i\in B''\setminus A^=} u_i(\mu, \mathbf p') &= \sum\limits_{i\in B''\cap A^+} u_i(\mu, \mathbf p') \ + \sum\limits_{i\in B''\cap A^-} u_i(\mu, \mathbf p') \\
			&= \sum\limits_{i\in B\cap A^+} u_i(\mu, \mathbf p') \ + \sum\limits_{i\in B'\cap A^-} u_i(\mu, \mathbf p') \\
			&=\sum\limits_{j\in S\cap A^-} u_{\mu_j}(\mu, \mathbf p') \ + \sum\limits_{i\in B'\cap A^-} u_i(\mu, \mathbf p').
		\end{align*}
		Finally,
		\begin{align*}
			\sum\limits_{j\in S''\setminus A^=} v_j(\mu, \mathbf p') &= \sum\limits_{j\in S''\cap A^+} v_j(\mu, \mathbf p') \ + \sum\limits_{j\in S''\cap A^-} v_j(\mu, \mathbf p') \\
			&= \sum\limits_{j\in S \cap A^+} v_j(\mu, \mathbf p') \ + \sum\limits_{j\in S'\cap A^-} v_j(\mu, \mathbf p')\\
			&= \sum\limits_{i\in B \cap A^-} v_{\mu_i}(\mu, \mathbf p') \ + \sum\limits_{j\in S'\cap A^-} v_j(\mu, \mathbf p').
		\end{align*}
		The second equalities for the first two sums come from the fact that the utility of agents in $A^+$ is $0$, while the utility of agents in $A^-$ is the utility generated by their match. The second equalities for the two last sums come from the definition of $S''$ and $B''$ that allows to reindex the sum. The third equalities for the two last sums come from the fact that $\mu(A^+) = A^-$ and $\mu$ is a bijection. Gathering the sum with the same range of summation, we obtain
		\begin{align*}
			D = & \sum\limits_{i\in B'\cap A^-} a_{i,\mu_i} - u_i(\mu, \mathbf p) - v_{\mu_i}(\mu, \mathbf p) + \sum\limits_{j\in S'\cap A^-} a_{\mu_j, j} - u_{\mu_j}(\mu, \mathbf p) - v_j(\mu, \mathbf p)\\
			& - \sum\limits_{i \in (B \setminus B')\cap A^-} v_{\mu_i}(\mu, \mathbf p) \ - \sum\limits_{j \in (S \setminus S')\cap A^-} u_{\mu_j}(\mu, \mathbf p)\\
			= & - \sum\limits_{i \in (B \setminus B')\cap A^-} v_{\mu_i}(\mu, \mathbf p) \ - \sum\limits_{j \in (S \setminus S')\cap A^-} u_{\mu_j}(\mu, \mathbf p).
		\end{align*}
		Finally, we conclude that $D \geq 0$ as the match of agent in $A^-$ are in $A^+$, and thus were not individually rational at $(\allocation)$ and had thus negative utilities.
	\end{proof}
	
	\begin{lemma}
		Let $(\allocation)$ be an allocation. Then, the dual of
		\begin{align*}
			&\hspace{-1.5cm}\min_{(\boldsymbol{p},\boldsymbol{\tau},\boldsymbol{\eta})\in\mathbb{R}_+^{S}\times\mathbb{R}_+^{B}\times\mathbb{R}_+^{S}} \sum_{i\in B} \tau_i + \sum_{j\in S} \eta_j &\\
			\hspace{1cm}\text{s.t.}\ &h_{i,\mu_i} - p_{\mu_i} + \tau_i \geq 0, &\forall i \in B,\\
			&p_j - c_j + \eta_j \geq 0, &\forall j \in S,\\
			&h_{i,\mu_i} - p_{\mu_i} + \tau_i + p_j - c_j + \eta_j \geq a_{i,j}, &\forall (i,j) \in B \times S.
		\end{align*}
		can be written as
		\begin{align*}
			\max &\sum_{i\in B}\sum_{j\in S} \gamma_{i,j}(h_{i,j} - h_{i,\mu_i}) + \sum_{j\in S}\beta_j c_j  -\sum_{i\in B}\alpha_i h_{i,\mu_i} \\
			\text{s.t.}\ & \alpha_i + \sum_{j\in S}\gamma_{i,j} = 1 , \forall i \in B,\\
			&\beta_j + \sum_{i\in B}\gamma_{i,j} \leq 1, \forall j \in S,\\ &\boldsymbol{\alpha},\boldsymbol{\beta},\boldsymbol{\gamma} \geq 0.
		\end{align*}
	\end{lemma}
	
	\begin{proof}
		Set $\theta_i := \tau_i - p_{\mu_i} \in \mathbb{R}$ and $w_j := p_j + \eta_j\in\mathbb{R}_+$, and consider the Lagrangian,
		\begin{align*}
			\mathcal{L}(\boldsymbol{p}, \boldsymbol{\theta},\boldsymbol{w},\boldsymbol{\alpha},\boldsymbol{\beta},\boldsymbol{\gamma}) &:= \sum_{i\in B} (\theta_i + p_{\mu_i}) + \sum_{j\in S} (w_j - p_j) - \sum_{j\in S}\beta_j(w_j - c_j) \\
			&-\sum_{i\in B}\alpha_i (h_{i,\mu_i} + \theta_i)  - \sum_{i\in B}\sum_{j\in S}\gamma_{i,j}(h_{i,\mu_i} + \theta_i + w_j - c_j - a_{i,j}).
		\end{align*}
		It follows that 
		\begin{align*}
			\max_{\boldsymbol{\alpha},\boldsymbol{\beta},\boldsymbol{\gamma}}\min_{\boldsymbol{p}, \boldsymbol{\theta},\boldsymbol{w}}  \mathcal{L}(\boldsymbol{p}, \boldsymbol{\theta},\boldsymbol{w},\boldsymbol{\alpha},\boldsymbol{\beta},\boldsymbol{\gamma}) \leq 
			\min_{\boldsymbol{p}, \boldsymbol{\theta},\boldsymbol{w}} \max_{\boldsymbol{\alpha},\boldsymbol{\beta},\boldsymbol{\gamma}} \mathcal{L}(\boldsymbol{p}, \boldsymbol{\theta},\boldsymbol{w},\boldsymbol{\alpha},\boldsymbol{\beta},\boldsymbol{\gamma}),
		\end{align*}
		and for both problems to be feasible, we impose $\boldsymbol{\alpha},\boldsymbol{\beta},\boldsymbol{\gamma} \geq 0$. Considering that $a_{i,j} + c_j = h_{i,j}$ and rearranging the Lagrangian, we obtain,
		\begin{align*}
			\mathcal{L}(\boldsymbol{p}, \boldsymbol{\theta},\boldsymbol{w},\boldsymbol{\alpha},\boldsymbol{\beta},\boldsymbol{\gamma}) &=  \sum_{i\in B}\sum_{j\in S} \gamma_{i,j}(h_{i,j} - h_{i,\mu_i}) + \sum_{j\in S}\beta_j c_j  -\sum_{i\in B}\alpha_i h_{i,\mu_i}\\
			&+\sum_{i\in B}\theta_i(1-\alpha_i - \sum_{j\in S}\gamma_{i,j})  + \sum_{i\in B} p_{\mu_i}
			+ \sum_{j\in S} w_j(1-\beta_j-\sum_{i\in B}\gamma_{i,j}) - \sum_{j\in S} p_j.
		\end{align*}
		Since $\sum_{i\in B} p_{\mu_i} = \sum_{j\in S} p_j$, we obtain the dual linear problem,
		\begin{align*}
			\max_{\boldsymbol{\alpha},\boldsymbol{\beta},\boldsymbol{\gamma} \geq 0} &\sum_{i\in B}\sum_{j\in S} \gamma_{i,j}(h_{i,j} - h_{i,\mu_i}) + \sum_{j\in S}\beta_j c_j  -\sum_{i\in B}\alpha_i h_{i,\mu_i} \\
			\text{s.t.}\ & \alpha_i + \sum_{j\in S}\gamma_{i,j} = 1 , \forall i \in B,\\
			&\beta_j + \sum_{i\in B}\gamma_{i,j} \leq 1, \forall j \in S. 
		\end{align*}
	\end{proof}
	
	\begin{proof}[Proof of Proposition \ref{prop:post_bounds_ante_bounds_worst}]
		For any random variable with finite expectation, the minimum realization is a lower bound of the expectancy. This gives us the inequalities between the ex-ante measures and the ex-post ones. For the inequalities between the ex-ante and the average measures, we will prove them using the linearity of the expectation and the Jensen's inequality. For the sake of concision, we will solely denote $\mathbb{E}$ for $\mathbb{E}_{(\allocation)\sim ALG(\Gamma)}$.
		\begin{align*}
			\lambda^{\text{avg}}(\ALG, \Gamma) & = \frac{\sum_{i \in B} {u_i(\ALG,\Gamma)} \enspace + \enspace \sum_{j \in S} {v_j(\ALG,\Gamma)}}{\opt}\\
			&= \mathbb{E}\left[\frac{\sum_{i \in B} u_i(\allocation) \enspace + \enspace \sum_{j \in S} v_j(\allocation)}{\opt}\right]
			\\ & = \mathbb{E}\left[\lambda(\mu)\right] = \lambda^{\text{ante}}(\ALG,\Gamma).
		\end{align*}
		Regarding $\kappa$,
		\begin{align*}
			\kappa^{\text{avg}}(\ALG,\Gamma) & = \min_{(i,j)\in B\times S} \frac{{u_i(\ALG,\Gamma)} + {v_j(\ALG,\Gamma)}}{a_{i,j}}\\
			&= \min_{(i,j)\in B\times S} \mathbb{E}\left[\frac{u_i(\allocation) + v_j(\allocation)}{a_{i,j}}\right]
			\\ & \geq \mathbb{E}\left[\min_{(i,j)\in B\times S} \frac{u_i(\allocation) + v_j(\allocation)}{a_{i,j}}\right]
			\\&= \mathbb{E}\left[\kappa(\allocation)\right] = \kappa^{\text{ante}}(\ALG,\Gamma).
		\end{align*}
		Finally, for $\normSI^{\text{ante}} \leq \normSI^{\text{avg}}$, thanks to the linearity of $\mathbb E$, we just need to prove that $\esp{\SI(\allocation)} \geq \SI^{\text{avg}}$, with $\SI^{\text{avg}}$ the subset instability when replacing utilities with $(u(\ALG, \Gamma), v(\ALG, \Gamma))$. 
		\begin{align*}
			\SI^{\text{avg}} & = \max\limits_{\substack{(B',S')\subseteq (B, S) \\ \mu' \in \matching{B'}{S'}}} \left\{ \SW(\mu') - \left( \sum\limits_{i\in B'} u_i(\ALG, \Gamma) + \sum\limits_{j\in S'} v_j(\ALG, \Gamma) \right)\right\} \\
			& = \max\limits_{\substack{(B',S')\subseteq (B, S) \\ \mu' \in \matching{B'}{S'}}}\left\{ \mathbb{E}\left[\SW(\mu') - \left( \sum\limits_{i\in B'} u_i(\allocation)+ \sum\limits_{j\in S'}v_j(\allocation)\right)\right]\right\} \\
			& \leq \mathbb{E}\left[\max\limits_{\substack{(B',S')\subseteq (B, S) \\ \mu' \in \matching{B'}{S'}}}\left\{ \SW(\mu') - \left( \sum\limits_{i\in B'} u_i(\allocation)+ \sum\limits_{j\in S'}v_j(\allocation)\right)\right\}\right] \\
			& = \esp{\SI(\allocation)}.
		\end{align*}
		
		The preservation of \Cref{eq:three_metrics} for the ex-post and ex-ante measures comes from the monotonicity of the minimum and the expectation. For the average measures, the proof are similar to the ones leading to \Cref{eq:three_metrics} but with $(u(\ALG, \Gamma), v(\ALG, \Gamma))$ instead of $(u,v)$.
	\end{proof}
	
	\begin{proof}[Proof of Proposition \ref{prop:randomized_half_prices}]
		\newcommand{\probphalf}{\hat{\mathbf p}^{\text{half}}}
		Let us be more formal, adopting probabilistic notation. Let $\Omega$ be the universe. We define $(\probmu, \probphalf) : \Omega \to \matching{B}{S}\times \mathbb R^S$ as the result of $\ALG + \textit{Half}$ on $\Gamma$. Then, for every $\omega \in \Omega$, $(\probmu(\omega), \probphalf(\omega))$ is an allocation with prices set as in Proposition \ref{prop:half_prices_is_half_stable}. Thus, for every $\omega \in \Omega$, $\frac{1}{2}\cdot \lambda(\probmu (\omega)) \leq \normSI(\probmu(\omega), \probphalf(\omega))$. By linearity and monotonicity of the expectation, we obtain that 
		\begin{align*} 
			& \frac{1}{2}\cdot \lambda^{\text{ante}}(\ALG + \textit{Half}, \Gamma) = \frac{1}{2}\cdot \esp{\lambda(\probmu)} = \esp{\frac{1}{2}\cdot \lambda(\probmu )} \leq \esp{\normSI(\probmu, \probphalf)} = \normSI^{\text{ante}}(\ALG + \textit{Half}, \Gamma).
		\end{align*}
		Similarly, by linearity and monotonicity of the minimum, we obtain that $\frac{1}{2}\cdot \lambda^{\text{post}}(\probmu) \leq \normSI^{\text{post}}(\probmu, \probphalf)$, ex-post guarantees being the minimum over all $\omega\in\Omega$.
		
		Finally, we need to prove the inequality for average guarantees. We will then adapt the proof of Proposition \ref{prop:half_prices_is_half_stable}. Let $(B', S') \subset (B, S)$ and $\mu' \in \matching{B'}{S'}$.  As stated in the proof of Proposition \ref{prop:half_prices_is_half_stable}, for $\omega \in \Omega$, $$\SW(\mu') - \SW_{|B',S'}(\probmu(\omega), \probphalf(\omega)) \leq \opt - \frac{1}{2} \SW(\probmu(\omega)). $$
		By linearity and monotonicity of the expectation, we obtain
		$$ \SW(\mu') - \sum_{i \in B'} \esp{u_i(\probmu, \probphalf)} - \sum_{j\in S'} \esp{v_j(\probmu, \probphalf)} \leq \opt - \frac{1}{2}\cdot \esp{\SW(\probmu)}.$$
		Thus, taking the maximum over all coalition, we obtain
		\begin{align*}
			I^{\text{avg}} := & \max_{\substack{(B', S') \in (B, S) \\ \mu' \in \matching{B'}{S'}} } \SW(\mu') - \sum_{i \in B'} \esp{u_i(\probmu, \probphalf)} - \sum_{j\in S'} \esp{v_j(\probmu, \probphalf)} \leq \opt - \frac{1}{2}\cdot \esp{\SW(\probmu)},
		\end{align*}
		which gives \begin{align*}
			\normSI^{\text{avg}}(\ALG + \textit{Half}, \Gamma) &= 1 - \frac{ I^{\text{avg}}}{\opt} \geq \frac{1}{2} \cdot \frac{\esp{\SW(\probmu)}}{\opt}  = \frac{1}{2} \cdot \lambda^{\text{ante}}(\ALG + \textit{Half}, \Gamma)\\
			&= \frac{1}{2} \cdot \lambda^{\text{avg}}(\ALG + \textit{Half}, \Gamma).
		\end{align*}
	\end{proof}
	
	Let us formalize the algorithm ranking, assuming that the value of seller $j$ is $a_j$ and using the formulation from \cite{echenique_online_2023}.
	
	\begin{algorithm}[H]
		\caption{$\textit{Ranking}$}
		\label{alg:ranking}
		\begin{algorithmic}
			\FOR{$j \in S$}
			\STATE{Set price $p_j \gets a_je^{w_j-1}$ where $w_j \in[0,1]$ is sampled uniformly.}
			\ENDFOR
			\FOR{each buyer $i$ who arrives}
			\STATE{Let $N(i)$ be the unmatched neighbors of $i$.}
			\IF{$N(i) \neq \emptyset$}
			\STATE{Match $i$ to $j\in N(i)$ maximizing $a_j-p_j$.}
			\ENDIF
			\ENDFOR
		\end{algorithmic}
	\end{algorithm}
	
	\begin{proof}[Proof of Proposition \ref{prop:vertex_weighted_worst_is_one_half}]
		\newcommand{\algodemi}{\textit{Greedy}+\textit{Half}}
		Let us first formalize the algorithm denoted $\algodemi$. Any pair of agents that generate a non-zero utility are referred to as neighbors.
		
		\begin{algorithm}[H]
			\caption{$\algodemi$}
			\label{alg:greedy_half}
			\begin{algorithmic}
				\STATE{Start with an empty allocation}
				\FOR{each buyer $i$ who arrives}
				\IF{$i$ has no unmatched neighbors}
				\STATE{Leave $i$ unmatched}
				\ELSE
				\STATE{$j \gets \argmax\{ a_{i,j} \mid j\in B \text{ unmatched neighbor of } i \}$}
				\STATE{Match $i$ and $j$}
				\STATE{$p_j \gets c_j + \frac{a_{i,j}}{2}$}
				\ENDIF
				\ENDFOR
			\end{algorithmic}
		\end{algorithm}
		
		Let $(\allocation)$ be the result of $\algodemi$ on an instance $\Gamma$ of the online vertex weighted problem. In this context, let us note for $j \in S$, $a_j > 0$ such that for any $i\in B$, $a_{i,j}\in \{0, a_{j}\}$. First, notice that $(\allocation)$ is individually rational. Then, let $(i,j) \in B\times S$.
		
		If $a_{i,j} = 0$, (i.e. $i$ and $j$ are not neighbors), then $u_i(\allocation) + v_j(\allocation) \geq \frac{1}{2} \cdot 0$.
		
		Else, $a_{i,j} = a_j$. If $\mu_i = i$ and $\mu_j = j$, then $\algodemi$ would have match them. Thus at least one of them is matched. If $j$ is matched, then $v_j(\allocation) = \frac{a_j}{2}$ and thus $u_i(\allocation) + v_j(\allocation) \geq  v_j(\allocation) \geq \frac{a_j}{2} = \frac{a_{i,j}}{2}$. Otherwise, $i$ is matched and $a_{\mu_i} \geq a_{j}$, as $j$ is unmatched and $\algodemi$ did not match $i$ and $j$. Thus, $u_i(\allocation) + v_j(\allocation) \geq u_i(\allocation) = \frac{a_{\mu_i}}{2} \geq \frac{a_{j}}{2} =  \frac{a_{i,j}}{2}$.
		
		Thus, $\kappa^{\text{post}}(\algodemi, \Gamma) = \kappa(\allocation) \geq \frac{1}{2}$.
		
		Furthermore, as proved in \citet{aggarwal2011online}, no better ratio can be obtained for $\lambda^{\text{post}}$. This can be seen on the instances represented in \Cref{fig:counter_example_VW_post} where, whatever the way an algorithm matches $\ia$ with a positive probability, on one of the two instances, the outputted allocation will be $\frac{1}{2}$-optimal. Moreover, the same result holds if $\ia$ is never matched. 
		
		Finally, to obtain the upper bound for $\kappa^{\text{ante}}$, we take multiple independent instances of \Cref{fig:counter_example_VW_post} so that with high probability, on at least one of the sub-instances, $\kappa$ will be $\frac{1}{2}$. The formal construction of this instances is similar to the one of the proof of Proposition \ref{prop:kappa_guarantees_edge_weighted_free_disposal}. We conclude the proof by using Proposition \ref{prop:post_bounds_ante_bounds_worst}.
		\begin{figure}
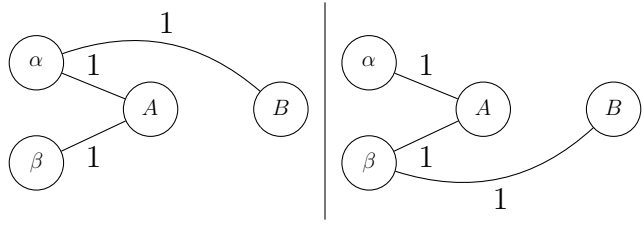

			\centering
			\begin{tabular}{c|c} \smallexample{}{1}{1} &  \smallexample{}{2}{1} \end{tabular}
			\caption{Instances on which no ex-post guarantees better than $\nicefrac{1}{2}$ can be achieved.}
			\label{fig:counter_example_VW_post}
		\end{figure}
	\end{proof}
	
	\begin{proof}[Proof of Proposition \ref{prop:kappa_guarantees_edge_weighted_free_disposal}]
		
		Let $\ALG$ be an algorithm for the edge-weighted with free disposal setting. Let $l\in\mathbb N^*$ and $W \in \mathbb{R}^*_+$. Let us consider the instance $\Gamma^{begin}_{W, l}$ as \begin{itemize}
			\item $B = \{\ia_1, \ia_2, \dots, \ia_l\}$
			\item $S = \{\ja_1, \jb_1, \ja_2, \jb_2, \dots, \ja_l, \jb_l\}$
			\item for $k\in\{1, \dots, l\}$, $a_{\ia_k, \ja_k} = a_{\ia_k, \jb_k} = 1$
			\item $a = 0$ elsewhere
			\item buyers arrives in the order $\ia_1, \ia_2, \dots, \ia_l$.
		\end{itemize}
		
		\begin{figure}[ht]
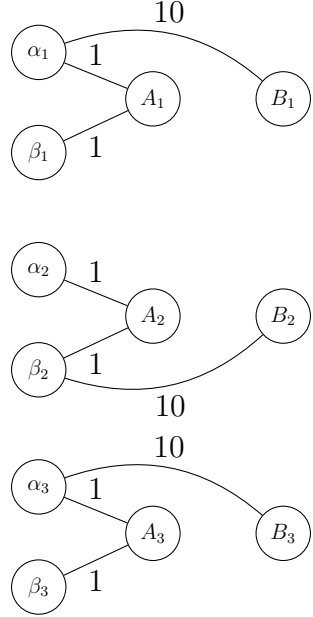

			\centering
			\begin{tabular}{c}
				\smallexample{1}{1}{10}\\
				\smallexample{2}{2}{10}
				\\ \smallexample{3}{1}{10}
			\end{tabular}\\
			Order of arrival : $\ia_1, \ia_2, \ia_3, \ib_1, \ib_2, \ib_3$.
			\caption{Example of hard instance for the edge-weighted problem with free disposal. In this case, $l = 3$ and $W = 10$.}
			\label{fig:counter_example_EW-FD}
		\end{figure}
		Let us consider $\hat{\mu}^{begin}$ the random variable representing the matching outputted by $\mathcal A$ on $\Gamma^{begin}_{W, l}$. Then, in each pair $(\ja_k, \jb_k)$, one seller is chosen more than the other. To make this choice a bad one, new buyers will arrive, with an high utility with the sellers that are more likely matched. An example of such instance is presented in Figure \ref{fig:counter_example_EW-FD}. Formally, and to avoid correlation problems, we will find a sequence $(x_1, \dots, x_l) \in \prod_{1\leq k\leq l} \{\ja_k, \jb_k\}$ not chosen enough.
		\begin{align*}
			1 &\geq \mathbb P\biggl( \bigsqcup\limits_{\substack{(x_1, \dots, x_l) \\ x_k\in\{\ia_k, \ib_k\}}} \Big( \forall k \in \{1, \dots, l\},\, \hat{\mu}^{begin}(\ia_k) = x_k  \Big) \biggr)\\
			&= \sum\limits_{\substack{(x_1, \dots, x_l) \\ x_k\in\{\ia_k, \ib_k\}}} \mathbb P\Big( \forall k \in \{1, \dots, l\},\,  \hat{\mu}^{begin}(\ia_k) = x_k  \Big).
		\end{align*}
		
		Thus, as it is a sum of $\frac{1}{2^l}$ terms that is smaller than $1$, 
		\begin{align*}\exists (x_1, \dots,& x_l) \in \prod\limits_{1\leq k \leq l} \{\ja_k, \jb_k\} \: :  \mathbb{P}\Big( \forall k \in \{1, \dots, l\},\,  \hat{\mu}^{begin}(\ia_k) = x_k \Big) \leq \frac{1}{2^l}. \end{align*}
		Then, for $k \in \{1, \dots, l\}$, let us denote $y_k := \ja_k$ if $x_k = \jb_k$ and $y_k:= \jb_k$ if $x_k = \ja_k$. Intuitively, $x$ are the sellers not enough matched, while $y$ are the sellers too matched, which are going to have better opportunities. Then, we extend $\Gamma^{begin}_{W, l}$ in $\Gamma^{full}_{W, l}$ such that $l$ new buyers arrive, (after the $l$ previous ones), $\ib_1, \dots, \ib_l$, with $a_{\ib_k, y_k} = W$ and $a_{\ib_k, z}= 0$ for $1 \leq k \leq l$ and $z \in S \backslash \{y_k\}$. For $k\in\{1, \dots, l\}$, if $\ia_k$ and $y_k$ are matched, then if we do not match $\ib_k$, then $(y_k, \ib_k)$ will be able to produce $W$ while having at most $1$ and if we match $y_k$ and $B_k$, $(x_k, \ja_k)$ will have $0$ while being able to produce $1$. Whatever the case is, (and whatever the prices), if for any $k\in\{1, \dots, l\}$, $\ia_k$ and $y_k$ are matched, $\kappa$ will be smaller than $\frac{1}{W}$. Then, $\ALG$ being online, it starts by producing $\hat{\mu}^{begin}$ on $\Gamma^{full}_{W, l}$, and thus,
		\begin{align*}
			\mathbb P\left(\kappa\left(\ALG\left(\Gamma^{full}_{W, l}\right)\right) \leq \frac{1}{W} \right) & \geq \mathbb P\left( \exists i \in\{1, \dots, l\} \: : \: \hat{\mu}^{begin}(\ia_i) = y_i \right)\\
			& \geq 1 - \mathbb P\left( \forall i \in\{1, \dots, l\} \: : \: \hat{\mu}^{begin}(\ia_i) = x_i \right)\\
			& \geq 1 - \frac{1}{2^l}
		\end{align*}
		with the first inequality coming from the discussion above, the second one being an inequality and not an equality because $\ia_i$ may have been left unmatched and the last being the definition of $x$.
		Finally, \begin{align*}
			\kappa^{\text{ante}} (\ALG, \Gamma^{full}_{W, l}) &= \mathbb E_{(\allocation)\sim \ALG(\Gamma^{full}_{W, l})}[\kappa(\allocation)] \\
			& \leq \frac{1}{W} \cdot \mathbb P\left(\kappa(\mathcal A(\Gamma^{full}_{W, l})) \leq \frac{1}{W} \right) + 1\cdot \mathbb P\left(\kappa(\ALG(\Gamma^{full}_{W, l})) > \frac{1}{W} \right) \\
			& = \frac{1}{W} + 1 -  \mathbb P\left(\kappa(\ALG(\Gamma^{full}_{W, l})) \leq \dfrac{1}{W} \right)\\
			&\leq \frac{1}{W} + \frac{1}{2^l}.
		\end{align*}
		This being true for every $l$ and $W$, making them both go to infinity prove the result.
	\end{proof}

	%%%%%%%%%%%%%%%%%%%%%%%%%%%%%%%%%%%%%%%%%%%%%%%%%%%%%%%%%%%%%%%%%%%%%%%%

	%%%%%%%%%%%%%%%%%%%%%%%%%%%%%%%%%%%%%%%%%%%%%%%%%%%%%%%%%%%%%%%%%%%%%%%%
	
\end{document}